\documentclass{mgiartcl}
\pdfoutput=1

\usepackage{tikz}
\usetikzlibrary{arrows}
\usetikzlibrary{shapes}
\tikzset{every picture/.style={>=stealth,bend angle=20}}
\tikzset{every label/.style={font=\scriptsize}}
\tikzset{every node/.style={font=\scriptsize}}
\tikzset{play/.style={circle,draw,minimum size=#1}}
\tikzset{play/.default=0.625cm}
\tikzset{prob/.style={diamond,draw,minimum size=#1}}
\tikzset{prob/.default=0.7cm}
\tikzset{end/.style={rectangle,draw,minimum size=#1}}
\tikzset{end/.default=0.55cm}
\usepackage{booktabs}

\title{Decision Problems for Nash Equilibria \\ in Stochastic Games\thanks{This
work was supported by the DFG Research Training Group~1298
(\textsc{AlgoSyn}) and the ESF Research Networking Programme \textsc{Games}.}}
\author{Michael Ummels\addr{1} and Dominik Wojtczak\addr{2}}
\address{RWTH Aachen University, Germany \\
\email{ummels@logic.rwth-aachen.de}}
\address{CWI, Amsterdam, The Netherlands \\
\email{d.k.wojtczak@cwi.nl}}

\DeclareMathOperator{\id}{id}

\DeclareMathOperator{\Prob}{Pr}

\DeclareMathOperator{\val}{val}
\newcommand{\Win}{\mbox{Win}}
\DeclareMathOperator{\Reach}{Reach}
\DeclareMathOperator{\Inf}{Inf}
\DeclareMathOperator{\StreettEC}{StreettEC}
\DeclareMathOperator{\RabinEC}{RabinEC}

\newcommand{\NE}{\mbox{NE}\xspace}
\newcommand{\QualNE}{\mbox{QualNE}\xspace}
\newcommand{\PureNE}{\mbox{PureNE}\xspace}
\newcommand{\StatNE}{\mbox{StatNE}\xspace}
\newcommand{\PosNE}{\mbox{PosNE}\xspace}
\newcommand{\FinNE}{\mbox{FinNE}\xspace}
\newcommand{\SqrtSum}{\mbox{SqrtSum}\xspace}
\newcommand{\PPAD}{\mbox{PPAD}\xspace}
\newcommand{\ExThR}{\ensuremath{\mathrm{ExTh}(\mathfrak{R})}\xspace}

\newcommand{\SOneS}{\textrm{S1S}\xspace}

\newcommand{\cf}{cf.\xspace}
\newcommand{\ea}{et al.\xspace}
\newcommand{\eg}{e.g.\xspace}
\newcommand{\ie}{i.e.\xspace}

\newcommand{\wlg}{w.l.o.g.\xspace}

\newif\ifapdx
\apdxtrue

\begin{document}
\maketitle
\setcounter{footnote}{0}

\begin{abstract}
We analyse the computational complexity of finding Nash equilibria in
stochastic multiplayer games with $\omega$-regular objectives.\linebreak
While the existence of an equilibrium whose payoff falls into a certain
interval may be undecidable, we single out several decidable
restrictions of the problem. First, restricting the search space
to stationary, or pure stationary, equilibria results in problems
that are typically contained in \PSpace and \NP, respectively. Second, we
show that the existence of an equilibrium with a binary payoff
(\ie an equilibrium where each player either wins or loses with
probability~1) is decidable. We also establish that the existence
of a Nash equilibrium with a certain binary payoff entails the existence
of an equilibrium with the same payoff in pure, finite-state strategies.
\end{abstract}

\section{Introduction}

We study \emph{stochastic games} \cite{NeymanS03} played by multiple
players on a finite, directed graph. Intuitively, a play of such a game evolves
by moving a token along edges of the graph: Each vertex of the graph is either
controlled by one of the players, or it is \emph{stochastic}. Whenever
the token arrives at a non-stochastic vertex, the player who controls this
vertex must move the token to a successor vertex; when the token arrives at a
stochastic vertex, a fixed probability distribution determines the next vertex.
A measurable function maps plays to payoffs.
In the simplest case, which we discuss here, the possible payoffs of a single
play are binary (\ie each player either wins or loses a given play). However,
due to the presence of stochastic vertices, a player's \emph{expected payoff}
(\ie her probability of winning) can be an arbitrary probability.

Stochastic games with $\omega$-regular objectives have been
successfully applied in the verification and
synthesis of reactive systems under the influence of random events. Such a 
system is usually modelled as a game between the
system and its environment, where the environment's objective is the complement
of the system's objective: the environment is considered hostile. Therefore,
the research in this area has traditionally focused on two-player games
where each play is won by precisely one of the two players, so-called
\emph{two-player zero-sum games}. However, the system may comprise of
several components with independent objectives, a situation which is naturally
modelled by a multiplayer game.

The most common interpretation of rational behaviour in multiplayer games is
captured by the notion of a \emph{Nash equilibrium} \cite{Nash50}. In a Nash
equilibrium, no player can improve her payoff by unilaterally switching to a
different strategy. Chatterjee~\ea \cite{ChatterjeeJM04} gave an algorithm for
computing a Nash equilibrium in a stochastic multiplayer games with
$\omega$-regular winning conditions. We argue that this is not
satisfactory. Indeed, it can be shown that their algorithm may compute an
equilibrium where all players lose almost surely (\ie receive expected
payoff~0), while there exist other equilibria where all players win almost
surely (\ie receive expected payoff~1).

In applications, one might look for an equilibrium where as many players as
possible win almost surely or where it is guaranteed that the expected payoff
of the equilibrium falls into a certain interval. Formulated as a
decision problem, we want to know, given a $k$-player game~$\calG$ with
initial vertex~$v_0$ and two thresholds $\vec{x},\vec{y}\in [0,1]^k$,
whether $(\calG,v_0)$ has a Nash equilibrium with expected payoff at
least~$\vec{x}$ and at most~$\vec{y}$. This problem, which we call \NE for
short, is a generalisation of the \emph{quantitative decision problem} for
two-player zero-sum games, which asks whether in such a game player~0 has
a strategy that ensures to win the game with a probability that lies above
a given threshold.

In this paper, we analyse the decidability of \NE for games with
$\omega$-regular objectives. Although the decidability of \NE remains
open, we can show that several restrictions of \NE are decidable: First, we
show that \NE becomes decidable when one restricts the search space to
equilibria in positional (\ie pure, stationary), or stationary, strategies,
and that the resulting decision problems typically lie in \NP and \PSpace,
respectively  (\eg if the objectives are specified as Muller
conditions). Second, we show that the following \emph{qualitative} version of
\NE is decidable: Given a $k$-player game~$\calG$ with initial vertex~$v_0$ and
a \emph{binary} payoff~$\vec{x}\in\{0,1\}^k$, decide whether $(\calG,v_0)$ has
a Nash equilibrium with expected payoff~$\vec{x}$. Moreover, we prove that,
depending on the representation of the objective, this problem is
typically complete for one of the complexity classes \PTime, \NP, \coNP and
\PSpace, and that the problem is invariant under restricting the
search space to equilibria in pure, finite-state strategies.

Our results have to be viewed in light of the (mostly) negative results
we derived in \cite{UmmelsW09}. In particular, it was shown in \cite{UmmelsW09}
that \NE becomes
undecidable if one restricts the search space to equilibria in pure strategies
(as opposed to equilibria in possibly mixed strategies), even for \emph{simple
stochastic multiplayer games}. These are games with simple reachability
objectives. The undecidability result crucially makes use of the
fact that the Nash equilibrium one is looking for can have a payoff that is not
binary. Hence, this result cannot be applied to the qualitative version of \NE,
which we show to be decidable in this paper. It was also proven in
\cite{UmmelsW09}
that the problems that arise from \NE when one restricts the search space to
equilibria in positional or stationary strategies are both \NP-hard. Moreover,
we showed that the restriction to stationary strategies is at least as hard as
the problem \SqrtSum \cite{AllenderBKM06}, a problem which is not known to
lie inside the polynomial hierarchy.
This demonstrates that the upper bounds we prove for these problems in this
paper will be hard to improve.

\subsubsection{Related Work}

Determining the complexity of Nash equilibria has attracted much
interest in recent years. In particular, a series of papers culminated in the
result that computing a Nash equilibrium of a two-player game in strategic form
is complete for the complexity class \PPAD \cite{DaskalakisGP06,ChenD06}.
However, the work closest to ours is \cite{Ummels08}, where the
decidability of (a variant of) the qualitative version of \NE in infinite games
without stochastic vertices was proven. Our results complement the results in
that paper,
and although our decidability proof for the qualitative setting is structurally
similar to the one in \cite{Ummels08}, the presence of stochastic
vertices makes the proof substantially more challenging.

Another subject that is related to the study of stochastic multiplayer games
are \emph{Markov decision processes with multiple objectives}. These can be
viewed as stochastic multiplayer games where all non-stochastic vertices are
controlled by a single player. For $\omega$-regular objectives,
Etessami \ea \cite{EtessamiKVY08} proved the decidability of \NE for these
games. Due to the different nature of the restrictions, this result is
incomparable to our results.

\section{Preliminaries}\label{sect:prelim}

The model of a \emph{(two-player zero-sum) stochastic game}
\cite{Condon92} easily generalises to the multiplayer case: Formally,
a \emph{stochastic multiplayer game (SMG)} is a tuple
$\calG=(\Pi,V,(V_i)_{i\in\Pi},\Delta,(\Win_i)_{i\in\Pi})$
where
\begin{itemize}
 \item $\Pi$ is a finite set of \emph{players} (usually
$\Pi=\{0,1,\dots,k-1\}$);
 \item $V$ is a finite, non-empty set of \emph{vertices};
 \item $V_i\subseteq V$ and $V_i\cap V_j=\emptyset$ for each $i\not=j\in\Pi$;
 \item $\Delta\subseteq V\times([0,1]\cup\{\bot\})\times V$ is the
  \emph{transition relation};
 \item $\Win_i\subseteq V^\omega$ is a Borel set for each $i\in\Pi$.
\end{itemize}
The structure $G=(V,(V_i)_{i\in\Pi},\Delta)$ is called the \emph{arena} of
$\calG$, and $\Win_i$ is called the \emph{objective}, or the
\emph{winning condition}, of
player~$i\in\Pi$. A vertex $v\in V$ is \emph{controlled by player~$i$} if
$v\in V_i$ and a \emph{stochastic vertex} if $v\not\in\bigcup_{i\in\Pi} V_i$.

We require that a transition is labelled by a probability iff
it originates in a stochastic vertex: If $(v,p,w)\in\Delta$ then $p\in
[0,1]$ if $v$ is a stochastic vertex and $p=\bot$ if $v\in V_i$ for some $i\in
\Pi$. Additionally, for each pair of a stochastic vertex~$v$ and an arbitrary
vertex~$w$, we require that there exists precisely one $p\in[0,1]$ such that
$(v,p,w)\in\Delta$.  Moreover, for each stochastic vertex $v$, the outgoing
probabilities must sum up to 1: $\sum_{(p,w):(v,p,w)\in \Delta} p=1$.
Finally, we require that for each vertex the set $v\Delta:=\{w\in V:
\text{exists $p\in(0,1]\cup\{\bot\}$ with $(v,p,w)\in\Delta$}\}$ is non-empty,
\ie every vertex has at least one successor.

A special class of SMGs are \emph{two-player zero-sum stochastic games
(2SGs)}. These are SMGs played by only two players (player~0 and player~1) and
one player's objective is the complement of the other player's objective, \ie
$\Win_0=V^\omega\setminus\Win_1$. An even more restricted
model are \emph{one-player stochastic games}, also known as \emph{Markov
decision processes (MDPs)}, where there is only one player (player~0). Finally,
\emph{Markov chains} are SMGs with no players at all, \ie there are only
stochastic vertices.


\subsubsection{Strategies and strategy profiles}

In the following, let $\calG$ be an arbitrary SMG.
A \emph{(mixed) strategy of player~$i$ in~$\calG$} is a mapping
$\sigma:V^*V_i\to\calD(V)$ assigning to each possible \emph{history}
$xv\in V^* V_i$ of vertices ending in a vertex controlled by player~$i$ a
(discrete) probability distribution over $V$ such that $\sigma(xv)(w)>0$ only
if $(v,\bot,w)\in\Delta$. Instead of $\sigma(xv)(w)$, we usually write
$\sigma(w\mid xv)$.
A \emph{(mixed) strategy profile of $\calG$} is a tuple
$\vec{\sigma}=(\sigma_i)_{i\in\Pi}$ where $\sigma_i$ is a strategy of
player~$i$ in~$\calG$.
Given a strategy profile $\vec{\sigma}=(\sigma_j)_{j\in\Pi}$ and a
strategy~$\tau$ of player~$i$, we denote by $(\vec{\sigma}_{-i},
\tau)$ the strategy profile resulting from $\vec{\sigma}$ by replacing
$\sigma_i$ with $\tau$.

A strategy $\sigma$ of player~$i$ is called \emph{pure}
if for each $xv\in V^*V_i$ there exists $w\in v\Delta$ with $\sigma(w\mid xv)
=1$. Note that a pure strategy of player~$i$ can be identified with a function
$\sigma:V^*V_i \to V$. A strategy profile $\vec{\sigma}=(\sigma_i)_{i\in\Pi}$
is called \emph{pure} if each $\sigma_i$ is pure.

A strategy $\sigma$ of player~$i$ in $\calG$ is called \emph{stationary}
if $\sigma$ depends only on the current vertex: $\sigma(xv)=\sigma(v)$
for all $xv\in V^*V_i$. Hence, a stationary strategy of player~$i$ can be
identified with a function $\sigma:V_i\to\calD(V)$.
A strategy profile $\vec{\sigma}=(\sigma_i)_{i\in\Pi}$ of $\calG$ is called
\emph{stationary} if each $\sigma_i$ is stationary.

We call a pure, stationary strategy a \emph{positional strategy} and
a strategy profile consisting of positional strategies only a
\emph{positional strategy profile}. Clearly, a positional strategy of
player~$i$ can be identified with a function $\sigma:V_i\to V$.
More generally, a pure strategy $\sigma$ is called \emph{finite-state} if it
can be implemented by a finite
automaton with output or, equivalently, if the equivalence relation
$\mathord{\sim}\subseteq V^*\times V^*$ defined by $x\sim y$ if $\sigma(xz)=
\sigma(yz)$ for all $z\in V^*V_i$ has only finitely many equivalence
classes.
Finally, a \emph{finite-state strategy profile} is a profile consisting of
finite-state strategies only.

It is sometimes convenient to designate an initial vertex $v_0\in V$ of the
game. We call the tuple $(\calG,v_0)$ an \emph{initialised SMG}.
A strategy (strategy profile)
of $(\calG,v_0)$ is just a strategy (strategy profile) of $\calG$.
In the following, we will use the abbreviation SMG also for initialised SMGs.
It should always be clear from the context if the game is initialised or not.

Given an initial vertex~$v_0$ and a strategy profile $\vec{\sigma}=
(\sigma_i)_{i\in\Pi}$, the \emph{conditional probability of $w\in V$ given the
history $xv\in V^*V$} is the number $\sigma_i(w\mid xv)$ if $v\in V_i$ and the
unique $p\in [0,1]$ such that $(v,p,w)\in\Delta$ if $v$ is a stochastic vertex.
We abuse notation and denote this probability by $\vec{\sigma}(w\mid xv)$. The
probabilities $\vec{\sigma}(w\mid xv)$ induce a probability measure on the
space $V^\omega$ in the following way: The probability of a basic open set
$v_1\dots v_k\cdot V^\omega$ is 0 if $v_1\not=v_0$ and the product of the
probabilities $\vec{\sigma}(v_j\mid v_1\dots v_{j-1})$ for $j=2,\dots,k$
otherwise. It is a classical result of measure theory that this extends to a
unique probability measure assigning a probability to every Borel subset of
$V^\omega$, which we denote by $\Prob_{v_0}^{\vec{\sigma}}$.

Given a strategy $\sigma$ and a sequence $x\in V^*$, we define the
\emph{residual strategy} $\sigma[x]$ by $\sigma[x](yv)=\sigma(xyv)$.
If $\vec{\sigma}=(\sigma_i)_{i\in\Pi}$ is a strategy profile, then the
\emph{residual strategy profile} $\vec{\sigma}[x]$ is just the profile
of the residual strategies $\sigma_i[x]$. The following two lemmas are
taken from \cite{Zielonka04}.

\begin{lemma}\label{lemma:equivalent-profiles}
Let~$\vec{\sigma}$ and $\vec{\tau}$ be two strategy profiles of~$\calG$,
equal over a prefix-closed set $X\subseteq V^*$. Then 
$\Prob^{\vec{\sigma}}_{v_0}(B)=\Prob^{\vec{\tau}}_{v_0}(B)$ for every
Borel set~$B$ all of whose prefixes belong to~$X$.
\end{lemma}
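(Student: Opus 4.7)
The plan is to use the standard argument that the probability measure induced by a strategy profile is determined by its values on cylinders, and then lift the agreement on cylinders to all Borel sets via Dynkin's $\pi$-$\lambda$ theorem.

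First I would verify the claim on cylinders. For a finite path $y = v_0 v_1 \cdots v_k$ with all prefixes in $X$ (equivalently $y\in X$, by prefix-closedness), the definition of $\Prob^{\vec{\sigma}}_{v_0}$ unfolds as $\Prob^{\vec{\sigma}}_{v_0}(y\cdot V^\omega) = \prod_{j=1}^{k} \vec{\sigma}(v_j\mid v_0\cdots v_{j-1})$. Each factor is either a probability fixed by $\Delta$ (when $v_{j-1}$ is stochastic) or a value $\sigma_i(v_j\mid v_0\cdots v_{j-1})$ prescribed by the profile at the history $v_0\cdots v_{j-1}\in X$. In both cases the factor matches the corresponding one for $\vec{\tau}$, so the two measures coincide on every cylinder $y\cdot V^\omega$ with $y\in X$.

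Next I would invoke the $\pi$-$\lambda$ theorem. Let $\calP = \{y\cdot V^\omega : y\in X\}\cup\{V^\omega\}$: the intersection of two cylinders of this form is either empty or again of this form, so $\calP$ is a $\pi$-system. Let $\calL$ be the family of Borel sets $B\subseteq V^\omega$ on which $\Prob^{\vec{\sigma}}_{v_0}$ and $\Prob^{\vec{\tau}}_{v_0}$ agree; because both measures are probability measures, $\calL$ is a $\lambda$-system, and by the previous paragraph $\calP\subseteq\calL$. Dynkin's theorem then yields $\sigma(\calP)\subseteq\calL$.

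Finally I would argue that every Borel set $B$ satisfying the hypothesis of the lemma lies in $\sigma(\calP)$. Writing $X^\omega\subseteq V^\omega$ for the plays whose every finite prefix belongs to $X$, the assumption on~$B$ is precisely $B\subseteq X^\omega$. The identity $X^\omega = \bigcap_n \bigcup_{y\in X,\,|y|=n} y\cdot V^\omega$ shows $X^\omega\in\sigma(\calP)$, and for any finite word $y$ the trace $(y\cdot V^\omega)\cap X^\omega$ also belongs to $\sigma(\calP)$ (empty when $y\notin X$, an intersection of two elements of $\sigma(\calP)$ otherwise). These traces generate the Borel $\sigma$-algebra of the subspace $X^\omega$, so every Borel $B\subseteq X^\omega$ is in $\sigma(\calP)\subseteq\calL$, which gives the required equality. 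The only point that needs real care is this last topological verification that $\sigma(\calP)$ covers all Borel subsets of $X^\omega$; everything else is a direct unfolding of definitions combined with the standard monotone-class argument.
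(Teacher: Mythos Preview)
The paper does not give its own proof of this lemma; it is quoted from \cite{Zielonka04}. Your argument via Dynkin's $\pi$-$\lambda$ theorem is correct and is the standard route to such statements. One tiny technicality: as written, $\calP$ need not be closed under intersection, since two cylinders $y_1\cdot V^\omega$ and $y_2\cdot V^\omega$ with $y_1,y_2\in X$ incomparable have empty intersection; adjoining $\emptyset$ to $\calP$ fixes this at no cost (trivially $\emptyset\in\calL$). The final step, showing that every Borel $B\subseteq X^\omega$ lies in $\sigma(\calP)$, is handled correctly by your trace argument. It is perhaps worth noting that the authors themselves use the closely related monotone class theorem for an analogous extension-from-cylinders argument inside the proof of \cref{lemma:nash-threat}, so your approach is entirely in the spirit of the paper.
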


\begin{lemma}\label{lemma:residual-strategies}
Let $\vec{\sigma}$ be any strategy profile of~$\calG$, $xv\in V^*V$ a
history of~$\calG$, and $B\subseteq V^\omega$ a Borel set. Then
$\Prob^{\vec{\sigma}}_{v_0}(B\cap xv\cdot V^\omega)=\Prob^{\vec{\sigma}}_{v_0}
(xv\cdot V^\omega)\cdot \Prob^{\vec{\sigma}[x]}_{v}(B[x])$,
where $B[x]:=\{\alpha\in V^\omega: x\alpha\in B\}$.
\end{lemma}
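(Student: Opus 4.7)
\medskip

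\noindent\textbf{Proof plan.} The identity is a standard disintegration/conditioning statement for the Markov measure $\Prob^{\vec\sigma}_{v_0}$. My approach is to verify it first for basic cylinders and then extend by a monotone class argument.

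First, I would fix the cylinder sets $B = y\cdot V^\omega$ with $y\in V^+$ as a $\pi$-system generating the Borel $\sigma$-algebra on $V^\omega$, and do a case analysis according to the relative position of $y$ and $xv$. If $xv$ is a prefix of $y$, write $y = xvz$; then $B\cap xv\cdot V^\omega = xvz\cdot V^\omega$ and $B[x] = vz\cdot V^\omega$, and by the very definition of $\Prob^{\vec\sigma}_{v_0}$ as a product of the one-step conditional probabilities $\vec\sigma(w\mid hw')$, the product along $xvz$ factors as the product along $xv$ times the product along $vz$ taken with respect to the residual strategy. Here I would use the identity $\vec\sigma[x](w\mid yw') = \vec\sigma(w\mid xyw')$, which holds because residual strategies are defined pointwise and the stochastic transitions do not depend on the history. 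If $y$ is a (not necessarily proper) prefix of $xv$, then $B\cap xv\cdot V^\omega = xv\cdot V^\omega$ and $B[x] = V^\omega$, so both sides coincide. If $y$ and $xv$ are incomparable, the left-hand side is $0$; on the right, either $\Prob^{\vec\sigma}_{v_0}(xv\cdot V^\omega)=0$ (when $y$ and $xv$ diverge before position $|x|+1$), or $B[x]$ is a cylinder whose first vertex differs from $v$, so $\Prob^{\vec\sigma[x]}_v(B[x])=0$ by the base clause of the construction.

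Next, I would fix $x, v, \vec\sigma$ and let $\calL$ denote the family of all Borel sets $B\subseteq V^\omega$ for which the claimed identity holds. By the first step, $\calL$ contains every cylinder, and plainly $V^\omega\in\calL$. Closure of $\calL$ under complement relative to $V^\omega$ follows because both sides of the identity are additive in $B$, and the cylinder case gives the identity for $V^\omega$; closure under countable disjoint unions follows by $\sigma$-additivity of the two measures $B\mapsto \Prob^{\vec\sigma}_{v_0}(B\cap xv\cdot V^\omega)$ and $B\mapsto \Prob^{\vec\sigma}_{v_0}(xv\cdot V^\omega)\cdot\Prob^{\vec\sigma[x]}_v(B[x])$ (noting that $B\mapsto B[x]$ commutes with countable unions and complements). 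Hence $\calL$ is a $\lambda$-system containing the $\pi$-system of cylinders, and Dynkin's $\pi$-$\lambda$ theorem gives $\calL = \sigma(\text{cylinders}) = \text{Borel}(V^\omega)$.

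The main obstacle is bookkeeping in the cylinder case: one has to be careful about the $B[x]$ operation when elements of $B$ do not start with $x$ and about the boundary cases where $xv$ or $y$ coincide as prefixes. Once the cylinder case is settled cleanly, the extension is routine; in fact one could alternatively appeal directly to the uniqueness clause of Carath\'eodory's extension theorem applied to the two finite measures above, both of which assign the same value to each cylinder by step one.
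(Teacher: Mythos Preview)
The paper does not actually prove this lemma; it is quoted (together with \cref{lemma:equivalent-profiles}) from Zielonka's paper, so there is no in-paper proof to compare against. Your approach---verify the identity on cylinders and extend via a $\pi$--$\lambda$ (equivalently, monotone class) argument---is the standard one and is correct in outline.

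There is a small slip in your treatment of the incomparable case. When $y$ and $xv$ diverge at some position $\leq |x|$, you claim this forces $\Prob^{\vec\sigma}_{v_0}(xv\cdot V^\omega)=0$; but that probability has nothing to do with $y$. The correct observation is that in this subcase $B[x]=\emptyset$: no $\alpha$ satisfies $x\alpha\in y\cdot V^\omega$, because $x$ and $y$ already disagree on their common length. In the remaining subcase, where $y=xy''$ with the first vertex of $y''$ different from $v$, your reasoning is fine: $B[x]=y''\cdot V^\omega$ has $\Prob^{\vec\sigma[x]}_v$-measure~$0$ by the base clause. Similarly, in your Case~2, when $y=xv$ one has $B[x]=v\cdot V^\omega$ rather than $V^\omega$; the conclusion is unaffected since $\Prob^{\vec\sigma[x]}_v(v\cdot V^\omega)=1$. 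These are exactly the bookkeeping issues you anticipated, not a gap in the method; once corrected, the cylinder step is clean and the $\pi$--$\lambda$ extension (or the alternative via Carath\'eodory uniqueness) goes through verbatim.
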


For a strategy profile~$\vec{\sigma}$, we are mainly interested in the
probabilities $p_i:=\Prob^{\vec{\sigma}}_{v_0}(\Win_i)$ of winning. We call
$p_i$ the \emph{(expected) payoff of $\vec{\sigma}$ for player~$i$} and the
vector $(p_i)_{i\in\Pi}$ the \emph{(expected) payoff of $\vec{\sigma}$}.

\subsubsection{Subarenas and end components}

Given an SMG~$\calG$, we call a set $U\subseteq V$ a \emph{subarena of $\calG$}
if 1.\ $U\not=\emptyset$; 2.\ $v\Delta\cap U\not=\emptyset$ for
each $v\in U$, and 3.\ $v\Delta\subseteq U$ for each stochastic
vertex~$v\in U$.

A set $C\subseteq V$ is called an \emph{end component of~$\calG$} if $C$ is a
subarena, and additionally $C$ is strongly connected: for every pair of
vertices $v,w\in C$ there exists a sequence $v=v_1,v_2,\dots,v_n=w$ with
$v_{i+1}\in v_i\Delta$ for each $0<i<n$. An end component~$C$ is \emph{maximal
in a set $U\subseteq V$} if there is no end component~$C'\subseteq U$ with
$C\subsetneq C'$. For any subset $U\subseteq V$, the set of all end components
maximal in $U$ can be computed by standard graph algorithms in quadratic time
(see \eg \cite{Alfaro97}).

The central fact about end components is that, under \emph{any} strategy
profile, the set of vertices visited infinitely often is almost surely an
end component. For an infinite sequence $\alpha$, we denote by $\Inf(\alpha)$
the set of elements occurring infinitely often in $\alpha$.

\begin{lemma}[\cite{Alfaro97,CourcoubetisY95}]\label{lemma:end-components}
Let $\calG$ be any SMG, and let $\vec{\sigma}$ be any strategy profile
of~$\calG$. Then $\Prob^{\vec{\sigma}}_v(\{\alpha\in V^\omega:
\text{$\Inf(\alpha)$ is an end component}\})=1$ for each vertex~$v\in V$.
\end{lemma}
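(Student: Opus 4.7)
The plan is to reduce the claim to a union bound over all subsets of~$V$: it suffices to prove that for every $C\subseteq V$ that is \emph{not} an end component, one has $\Prob^{\vec{\sigma}}_v(\{\alpha:\Inf(\alpha)=C\})=0$. Since $V$ is finite, only finitely many candidate sets~$C$ need to be considered.

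Fix such a~$C$ and suppose $\Inf(\alpha)=C$. I would first observe that several of the end-component conditions are automatic: non-emptiness is immediate from the pigeonhole principle; the condition $u\Delta\cap C\neq\emptyset$ for every $u\in C$ holds because any vertex visited infinitely often has some outgoing edge that is used infinitely often, and the target of such an edge must itself lie in $\Inf(\alpha)=C$; strong connectivity is also automatic, since after some finite time the play stays inside $C$ while visiting each of its elements infinitely often, so for any $u,w\in C$ a finite path from $u$ to~$w$ through~$C$ can simply be read off from the play. Therefore the only way $C$ can fail to be an end component while equalling $\Inf(\alpha)$ is that some \emph{stochastic} vertex $u\in C$ has a successor $w\in u\Delta\setminus C$.

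To handle this remaining case, fix a triple $(C,u,w)$ with $u$ stochastic and $w\in u\Delta\setminus C$, and let $p:=p(u,w)>0$ be the transition probability. The goal is to show that the event ``$u\in\Inf(\alpha)$ but $w\notin\Inf(\alpha)$'' is $\Prob^{\vec{\sigma}}_v$-null. The crucial point is that at a stochastic vertex the transition probabilities do not depend on~$\vec{\sigma}$: for every history ending in~$u$, the conditional probability that the next vertex equals~$w$ is exactly~$p$. Consequently, conditional on the first $n{+}1$ vertices, the probability that the transition $u\to w$ is taken at step~$n$ equals~$p$ on $\{\alpha_n=u\}$ and $0$ otherwise. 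L\'evy's conditional Borel--Cantelli lemma then implies that, almost surely, $\{\alpha_n=u\}$ occurs infinitely often if and only if the transition $u\to w$ is taken at infinitely many steps~$n$; in particular, on the event $u\in\Inf(\alpha)$ one has $w\in\Inf(\alpha)$ almost surely. A finite union bound over the offending triples $(C,u,w)$ then completes the proof.

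The main obstacle is precisely the last step. Because $\vec{\sigma}$ may be arbitrarily history-dependent, the successive visits to~$u$ are \emph{not} independent trials, so the elementary second Borel--Cantelli lemma does not apply. The correct tool is its martingale (L\'evy) extension, leveraged by the fact that randomness at stochastic vertices is genuinely memoryless regardless of how the players mix their choices. Once this point is settled, the combinatorial observations of the previous paragraphs are essentially bookkeeping.
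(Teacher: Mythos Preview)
The paper does not give its own proof of this lemma: it is quoted verbatim from \cite{Alfaro97,CourcoubetisY95} and used as a black box. Your proposal is correct and is essentially the classical argument of those references: reduce to finitely many candidate sets $C\subseteq V$, observe that non-emptiness, successor-closure inside~$C$, and strong connectivity of $\Inf(\alpha)$ are deterministic consequences of $\Inf(\alpha)=C$, and then use the memorylessness of stochastic vertices together with a conditional Borel--Cantelli argument to rule out $C=\Inf(\alpha)$ whenever some stochastic $u\in C$ has a successor $w\notin C$.

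One small remark: your identification of L\'evy's extension of Borel--Cantelli as the right tool (rather than the independent-events version) is exactly the point that the original proofs make, and it is good that you flag why naive independence fails. An equivalent and perhaps slightly more elementary way to phrase the same step, closer to how \cite{CourcoubetisY95} argues, is to bound directly: for each $n$, the probability of visiting $u$ at least $n$ more times after the last visit to $w$ is at most $(1-p)^n$, uniformly in the history; summing and letting $n\to\infty$ gives the null set without invoking the martingale machinery explicitly. Either route is fine.
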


Moreover, for any end component~$C$, we can construct a
stationary strategy profile~$\vec{\sigma}$ that, when started in~$C$, 
guarantees to visit all (and only) vertices in~$C$ infinitely often.

\begin{lemma}[\cite{Alfaro97,CourcoubetisY98}]\label{lemma:end-component-strategy}
Let $\calG$ be any SMG, and let $C$ be any end component of~$\calG$. There
exists a stationary strategy profile~$\vec{\sigma}$ with
$\Prob^{\vec{\sigma}}_v(\{\alpha\in V^\omega:\Inf(\alpha)=C\})=1$ for
each vertex~$v\in C$.
\end{lemma}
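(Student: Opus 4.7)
The plan is to build a stationary profile $\vec{\sigma}$ under which the induced Markov chain, when started in $C$, is closed on $C$ and irreducible; standard facts about finite irreducible Markov chains then yield the claim.

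To construct $\vec{\sigma}$, for every player $i$ and every controlled vertex $v\in V_i\cap C$ I would let $\sigma_i(v)$ be the uniform distribution on $v\Delta\cap C$, which is non-empty because $C$ is a subarena; outside $C$ the strategies can be chosen arbitrarily. Then any play starting in $C$ surely remains in $C$: for a stochastic $v\in C$ the subarena condition $v\Delta\subseteq C$ forces the successor to stay in $C$, and for a controlled $v\in V_i\cap C$ the support of $\sigma_i(v)$ lies in $C$ by construction. Consequently the restriction of the induced Markov chain to $C$ is a genuine finite Markov chain.

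The heart of the proof is to show that this restricted chain is irreducible. Given $v,w\in C$, the strong connectivity of $C$ provides a sequence $v=v_1,v_2,\dots,v_n=w$ inside $C$ with $v_{j+1}\in v_j\Delta$ for each $j$. For a controlled $v_j$, the vertex $v_{j+1}$ lies in $v_j\Delta\cap C$ and hence in the support of the uniform distribution $\sigma_i(v_j)$; for a stochastic $v_j$, the membership $v_{j+1}\in v_j\Delta$ means that $(v_j,p,v_{j+1})\in\Delta$ for some $p\in(0,1]$. Multiplying these positive conditional probabilities bounds $\Prob^{\vec{\sigma}}_v$ of reaching $w$ from below by a strictly positive number. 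A standard result about finite irreducible Markov chains now asserts that almost every trajectory visits every state infinitely often, which is exactly $\Prob^{\vec{\sigma}}_v(\{\alpha\in V^\omega:\Inf(\alpha)=C\})=1$ for every $v\in C$.

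The only subtle point I expect is ensuring that the strong-connectivity witness path can be taken entirely within $C$ (which is implicit in the definition, since $C$ is strongly connected as a subarena) and checking that every transition along that path carries positive probability under $\vec{\sigma}$; the remaining ingredients are routine.
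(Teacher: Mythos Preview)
The paper does not supply its own proof of this lemma; it is quoted from the literature \cite{Alfaro97,CourcoubetisY98} without argument. Your proposal is correct and is essentially the standard proof that appears in those references: take the uniform distribution over $v\Delta\cap C$ at controlled vertices, observe that the resulting Markov chain is confined to $C$ and irreducible by strong connectivity, and conclude via recurrence of finite irreducible chains that every state of $C$ is visited infinitely often almost surely.

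One small remark on the ``subtle point'' you flag: the paper's definition of end component spells out strong connectivity as the existence of a sequence $v=v_1,\dots,v_n=w$ with $v_{j+1}\in v_j\Delta$, without explicitly writing $v_j\in C$ for intermediate~$j$. The intended reading (and the one used throughout the literature) is of course that the induced subgraph on $C$ is strongly connected, so the witnessing path lies in $C$; your argument is then fully justified. Nothing else needs to be added.
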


\subsubsection{Values, determinacy and optimal strategies}

Given a strategy~$\tau$ of player~$i$ in $\calG$ and a vertex~$v\in V$, the
\emph{value of $\tau$ from $v$} is the number $\val^\tau(v):=
\inf_{\vec{\sigma}}\Prob^{\vec{\sigma}_{-i},\tau}_{v}(\Win_i)$, where
$\vec{\sigma}$ ranges over all strategy profiles of~$\calG$. Moreover,
we define the \emph{value of~$\calG$ for player~$i$ from~$v$} as the supremum
of these values, \ie $\val_i^\calG(v)=\sup_\tau \val^\tau(v)$, where $\tau$
ranges over all strategies of player~$i$ in~$\calG$.
Intuitively, $\val_i^\calG(v)$ is the maximal payoff that player~$i$ can
ensure when the game starts from~$v$.
If $\calG$ is a two-player zero-sum game, a celebrated theorem due to Martin
\cite{Martin98} states that the game is \emph{determined}, \ie $\val_0^\calG=
1-\val_1^\calG$ (where the equality holds pointwise). The number
$\val^\calG(v):=\val_0^\calG(v)$ is consequently called the \emph{value
of~$\calG$ from~$v$}.

Given an initial vertex $v_0\in V$, a strategy~$\sigma$ of player~$i$ in $\calG$
is called \emph{optimal} if $\val^\sigma(v_0)=\val_i^\calG(v_0)$. A
\emph{globally optimal} strategy is a strategy that is optimal for every
possible initial vertex $v_0\in V$.
Note that optimal strategies do not need to exist since the supremum in the
definition of $\val_i^\calG$ is not necessarily attained. However, if for
every possible initial vertex there exists an optimal strategy, then there
also exists a globally optimal strategy.

\subsubsection{Objectives}

We have introduced objectives as abstract Borel sets of infinite
sequences of vertices; to be amendable for algorithmic solutions, all objectives
must be finitely representable. In verification, objectives are usually
\emph{$\omega$-regular sets} specified by formulae of the logic \SOneS (monadic
second-order logic on infinite words) or \LTL (linear-time temporal logic)
referring to unary predicates $P_c$ indexed by a finite set $C$ of
\emph{colours}. These are interpreted as winning conditions in a game by
considering a \emph{colouring} $\chi:V\to C$ of the vertices in the game.
Special cases are the following well-studied conditions:
\begin{itemize}
\item\emph{B\"uchi} (given by a  set $F\subseteq C$): the set
of all $\alpha \in C^\omega$ such that $\Inf(\alpha)\cap F\not=\emptyset$.

\item\emph{co-B\"uchi} (given by set $F\subseteq C$): the
set of all $\alpha\in C^\omega$ such that $\Inf(\alpha)\subseteq F$.

\item\emph{Parity} (given by a \emph{priority function} $\Omega:C\to\bbN$):
the set of all $\alpha\in C^\omega$ such that $\min(\Inf(\Omega(\alpha)))$ is
even.

\item\emph{Streett} (given by a set $\Omega$ of pairs $(F,G)$
where $F,G\subseteq C$): the set of all $\alpha\in C^\omega$ such that
for all pairs $(F,G)\in\Omega$ with $\Inf(\alpha)\cap F\not=\emptyset$ it is
the case that $\Inf(\alpha)\cap G\not=\emptyset$.

\item\emph{Rabin} (given by a set $\Omega$ of pairs $(F,G)$ where
$F,G\subseteq C$): the set of all $\alpha\in C^\omega$ such that
there exists a pair $(F,G)\in\Omega$ with $\Inf(\alpha)\cap F\not=\emptyset$
but $\Inf(\alpha)\cap G=\emptyset$.

\item\emph{Muller} (given by a family $\calF$ of sets
$F\subseteq C$): the set of all $\alpha\in C^\omega$ such that there
exists $F\in\calF$ with $\Inf(\alpha)=F$.
\end{itemize}
Note that any B\"uchi condition is a parity condition with
two priorities, that any parity condition is both a Streett and a Rabin
condition, and that any Streett or Rabin condition is a Muller condition.
(However, the translation from a set of Streett/Rabin pairs to an equivalent
family of accepting sets is, in general, exponential.)
In fact, the intersection (union) of any two parity conditions
is a Streett (Rabin) condition. Moreover, the complement of a B\"uchi (Streett)
condition is a co-B\"uchi (Rabin) condition and vice versa, whereas the class
of parity conditions and the class of Muller conditions are closed under
complementation. Finally, note that any of the above condition
is \emph{prefix-independent}: for every $\alpha\in C^\omega$ and $x\in
C^*$, $\alpha$ satisfies the condition iff $x\alpha$ does.

Theoretically, parity and Rabin conditions provide the best balance of
expressiveness and simplicity: On the one hand, any SMG where player~$i$ has a
Rabin objective admits a globally optimal positional strategy for this
player \cite{ChatterjeeAH05}. On the other hand,
any SMG with $\omega$-regular objectives can be reduced to an SMG
with parity objectives using \emph{finite memory} (see \cite{Thomas95}).
An important consequence of this reduction is that there exist
globally optimal finite-state strategies in every SMG with $\omega$-regular
objectives. In fact, there exist globally optimal pure strategies in every
SMG with prefix-independent objectives \cite{HornG08}.

In the following, for the sake of simplicity, we will only consider games where
each vertex is coloured by itself, \ie $C=V$ and $\chi=\id$. We would like to
point out, however, that all our results remain valid for games with other
colourings. For the same reason, we will usually not distinguish between a
condition and its finite representation.

\subsubsection{Decision problems for two-player zero-sum games}

The main computational problem for two-player zero-sum games is computing
the value (and optimal strategies for either player, if they exist).
Rephrased as a decision problem, the problem looks as follows:
\begin{quote}
Given a 2SG $\calG$, an initial vertex~$v_0$ and a
rational probability~$p$, decide whether $\val^\calG(v_0)\geq p$.
\end{quote}
A special case of this problem arises for $p=1$. Here, we only want to know
whether player~0 can win the game almost surely (in the limit). Let us
call the former problem the \emph{quantitative} and the latter problem the
\emph{qualitative decision problem for 2SGs}.

\Cref{table:summary-results} summarises the results about the complexity
of the quantitative and the qualitative decision problem for two-player
zero-sum stochastic games depending on the type of player~0's objective.
For MDPs, both problems are decidable in polynomial time for
all of the aforementioned objectives (\ie up to Muller conditions)
\cite{Chatterjee07,Alfaro97}.
\begin{table}
\begin{tabular}{rcc}
\toprule
& Quantitative & Qualitative \\
\midrule
(co-)B\"uchi & $\NP\cap\coNP$ \cite{ChatterjeeJH04}
 & \PTime-complete \cite{AlfaroH00} \\
Parity & $\NP\cap\coNP$ \cite{ChatterjeeJH04}
 & $\NP\cap\coNP$ \cite{ChatterjeeJH04} \\
Streett & \coNP-complete\ \cite{ChatterjeeAH05,EmersonJ88}
 & \coNP-complete\ \cite{ChatterjeeAH05,EmersonJ88} \\
Rabin & \NP-complete\ \cite{ChatterjeeAH05,EmersonJ88}
 & \NP-complete\ \cite{ChatterjeeAH05,EmersonJ88} \\
Muller & \PSpace-complete\ \cite{Chatterjee07,HunterD05}
 & \PSpace-complete\ \cite{Chatterjee07,HunterD05} \\
\bottomrule
\end{tabular}
\caption{\label{table:summary-results}The complexity of deciding the value in
2SGs.}
\end{table}

\section{Nash equilibria and their decision problems}

To capture rational behaviour of (selfish) players, John Nash \cite{Nash50}
introduced the notion of, what is now called, a \emph{Nash equilibrium}.
Formally, given a strategy profile~$\vec{\sigma}$ in an SMG $(\calG,v_0)$, a
strategy~$\tau$ of player~$i$ is called a \emph{best response to
$\vec{\sigma}$} if $\tau$ maximises the expected payoff of player~$i$:
$\Prob_{v_0}^{\vec{\sigma}_{-i},\tau'}(\Win_i)\leq
\Prob_{v_0}^{\vec{\sigma}_{-i},\tau}(\Win_i)$ for all strategies $\tau'$
of player~$i$. A Nash equilibrium is a strategy profile $\vec{\sigma}=
(\sigma_i)_{i\in\Pi}$ such that each $\sigma_i$ is a best response to
$\vec{\sigma}$. Hence, in a Nash equilibrium no player can improve her payoff
by (unilaterally) switching to a different strategy. For two-player zero-sum
games, a Nash equilibrium is nothing else than a pair of optimal strategies.

\begin{proposition}\label{prop:two-zerosum-nash}
Let $(\calG,v_0)$ be a two-player zero-sum game. A strategy profile
$(\sigma,\tau)$ of $(\calG,v_0)$ is a Nash equilibrium iff both $\sigma$
and $\tau$ are optimal. In particular, every Nash equilibrium of $(\calG,v_0)$
has payoff~$(\val^\calG(v_0),1-\val^\calG(v_0))$.
\end{proposition}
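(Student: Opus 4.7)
The plan rests on two easy ingredients: since the game is zero-sum, $\Prob^{\sigma,\tau}_{v_0}(\Win_0) + \Prob^{\sigma,\tau}_{v_0}(\Win_1) = 1$ for every profile $(\sigma,\tau)$; and by Martin's determinacy theorem~\cite{Martin98}, $\val_0^\calG(v_0) + \val_1^\calG(v_0) = 1$. With these in hand I would prove both implications and simply read off the payoff from the argument.

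For the ``if'' direction, I would assume that $\sigma$ and $\tau$ are both optimal. Instantiating the definition of $\val^\sigma$ at the profile $(\sigma,\tau)$ gives $\Prob^{\sigma,\tau}_{v_0}(\Win_0) \geq \val^\sigma(v_0) = \val_0^\calG(v_0)$, and symmetrically $\Prob^{\sigma,\tau}_{v_0}(\Win_1) \geq \val_1^\calG(v_0)$. Summing the two inequalities and using zero-sumness together with determinacy forces both to be equalities. Then for any deviation $\sigma'$ of player~0,
\[
\Prob^{\sigma',\tau}_{v_0}(\Win_0) = 1 - \Prob^{\sigma',\tau}_{v_0}(\Win_1) \leq 1 - \val^\tau(v_0) = \val_0^\calG(v_0) = \Prob^{\sigma,\tau}_{v_0}(\Win_0),
\]
so $\sigma$ is a best response to $\tau$; the symmetric calculation shows $\tau$ is a best response to $\sigma$.

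For the ``only if'' direction, let $p := \Prob^{\sigma,\tau}_{v_0}(\Win_0)$. Since $\tau$ is a best response to $\sigma$, $\Prob^{\sigma,\tau'}_{v_0}(\Win_1) \leq 1-p$ for every $\tau'$, whence $\val^\sigma(v_0) = \inf_{\tau'} \bigl(1 - \Prob^{\sigma,\tau'}_{v_0}(\Win_1)\bigr) = p$, so $\val_0^\calG(v_0) \geq p$. A symmetric argument using that $\sigma$ is a best response to $\tau$ gives $\val_1^\calG(v_0) \geq 1-p$. Determinacy forces both inequalities to collapse to equalities, so $\val^\sigma(v_0) = \val_0^\calG(v_0)$ and $\val^\tau(v_0) = \val_1^\calG(v_0)$. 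Hence $\sigma$ and $\tau$ are optimal, and the payoff of the equilibrium is $(\val^\calG(v_0),\, 1-\val^\calG(v_0))$.

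The only non-routine step is the appeal to Martin's theorem, which is what distinguishes genuinely zero-sum reasoning from an arbitrary two-player game; everything else is bookkeeping using the complementarity of $\Win_0$ and $\Win_1$ and the definition of $\val^{\cdot}$. I therefore expect no genuine obstacle, just a careful separation of the roles played by optimality, best-responsiveness, and determinacy.
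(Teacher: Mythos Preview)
Your proof is correct and follows essentially the same approach as the paper's: both arguments are elementary manipulations of the definitions of optimality and best response, with determinacy as the one substantive ingredient. The only differences are organisational---the paper argues both directions by contradiction whereas you argue directly, and you make the appeal to Martin's theorem explicit whereas the paper leaves it implicit (in particular, the paper's ``only if'' direction can in fact be carried out without determinacy, since from $\val^\sigma(v_0)=p$ and the best-response property of $\sigma$ one gets $\val^{\sigma'}(v_0)\leq\Prob^{\sigma',\tau}_{v_0}(\Win_0)\leq p$ for every $\sigma'$ directly; your route via determinacy is equally valid but slightly less economical there).
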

\begin{proof}
$(\Rightarrow)$ Assume that both $\sigma$ and $\tau$ are optimal, but
that $(\sigma,\tau)$ is not a Nash equilibrium. Hence, one of the players,
say player~1, can improve her payoff by playing some strategy $\tau'$. Hence,
$\val^\calG(v_0)=\Prob^{\sigma,\tau}_{v_0}(\Win_0)>
\Prob^{\sigma,\tau'}_{v_0}(\Win_0)$.
However, since $\sigma$ is optimal, it must also be the case that
$\val^\calG(v_0)\leq\Prob^{\sigma,\tau'}_{v_0}(\Win_0)$, a contradiction.
The reasoning in the case that player~0 can improve is analogous.

$(\Leftarrow)$ Let $(\sigma,\tau)$ be a Nash equilibrium of $(\calG,v_0)$,
and let us first assume that $\sigma$ is not optimal, \ie
$\val^\sigma(v_0)<\val^\calG(v_0)$. By the definition of $\val^\calG$,
there exists another strategy $\sigma'$ of player~0
such that
$\val^\sigma(v_0)<\val^{\sigma'}(v_0)\leq\val^\calG(v_0)$.
Moreover, since $(\sigma,\tau)$ is a Nash equilibrium:
\[
\Prob^{\sigma,\tau}_{v_0}(\Win_0)
\leq\val^\sigma(v_0)
<\val^{\sigma'}(v_0)=
\inf\nolimits_\tau\Prob^{\sigma',\tau}_{v_0}(\Win_0)
\leq \Prob^{\sigma',\tau}_{v_0}(\Win_0)\,.
\]
Thus player~0 can improve her payoff by playing $\sigma'$ instead of
$\sigma$, a contradiction to the fact that $(\sigma,\tau)$ is a Nash
equilibrium. Now, if we assume that $\tau$ is not optimal, we can
analogously show the existence of a strategy~$\tau'$ that player~1
can use to improve her payoff.
\end{proof}

So far, most research on finding Nash equilibria in infinite games
has focused on computing \emph{some} Nash equilibrium \cite{ChatterjeeJM04}.
However, a game may have several Nash equilibria with different payoffs, and
one might not be interested in \emph{any} Nash equilibrium but in one whose
payoff fulfils certain requirements. For example, one might look for a
Nash equilibrium where certain players win almost surely while certain others
lose almost surely. This idea leads to the following decision problem, which
we call \NE:\footnote{In the definition of \NE, the ordering $\leq$ is applied
componentwise.}
\begin{quote}
Given an SMG $(\calG,v_0)$ and thresholds
$\vec{x},\vec{y}\in[0,1]^\Pi$, decide whether there exists a
Nash equilibrium of $(\calG,v_0)$ with payoff $\geq\vec{x}$ and
$\leq\vec{y}$.
\end{quote}
Of course, as a decision problem the problem only makes sense if the game and
the thresholds $\vec{x}$ and $\vec{y}$ are represented in a finite way. In the
following, we will therefore assume that the thresholds and all transition
probabilities are rational, and that all objectives are
$\omega$-regular.

Note that \NE puts no restriction on the type of strategies that realise the
equilibrium. It is natural to restrict the search space to equilibria that are
realised in pure, finite-state, stationary, or even positional strategies. Let
us call the corresponding decision problems \PureNE, \FinNE, \StatNE and
\PosNE, respectively.

In a recent paper \cite{UmmelsW09}, we studied \NE and its
variants in the context of \emph{simple} stochastic multiplayer games (SSMGs).
These are SMGs where each player's objective is to reach a certain set~$T$ of
\emph{terminal vertices}: $v\Delta=\{v\}$ for each $v\in T$.
In particular, such objectives are both B\"uchi and co-B\"uchi conditions. Our
main results on SSMGs can be summarised as follows:
\begin{itemize}
 \item \PureNE and \FinNE are undecidable;
 \item \StatNE is contained in \PSpace, but \NP- and \SqrtSum-hard;
 \item \PosNE is \NP-complete.
\end{itemize}
In fact, \PureNE and \FinNE are undecidable even if one restricts to instances
where the thresholds are binary, but distinct, or if one restricts to instances
where the thresholds coincide (but are not binary). Hence,
the question arises what happens if the thresholds are binary and coincide.
This question motivates the following \emph{qualitative} version of \NE, a
problem which we call \QualNE:
\begin{quote}
Given an SMG $(\calG,v_0)$ and $\vec{x}\in\{0,1\}^\Pi$, decide whether
$(\calG,v_0)$ has a Nash equilibrium with payoff $\vec{x}$.
\end{quote}

In this paper, we show that \QualNE, \StatNE and \PosNE are decidable for
games with arbitrary $\omega$-regular objectives, and analyse the
complexities of these problems depending on the type of the objectives.

\section{Stationary equilibria}

In this section, we analyse the complexity of the problems \PosNE and \StatNE.
Lower bounds for these problems follow from our results on SSMGs 
\cite{UmmelsW09}.

\begin{theorem}
\PosNE is \NP-complete for SMGs with B\"uchi, co-B\"uchi, parity, Rabin,
Streett, or Muller objectives.
\end{theorem}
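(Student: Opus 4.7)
The plan is to establish membership in \NP and invoke the prior lower bound. For \NP-hardness, I would simply cite \cite{UmmelsW09}: already for simple stochastic multiplayer games (reachability objectives, which are both B\"uchi and co-B\"uchi conditions and hence also parity, Rabin, Streett and Muller), \PosNE was shown to be \NP-hard, and this lower bound transfers verbatim to each of the richer objective classes considered here.

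For the upper bound, I would use the following guess-and-check procedure. A positional strategy profile is just a function $\sigma:V\setminus\bigcup_i (\text{stochastic vertices})\to V$ selecting a successor at every controlled vertex, so it has polynomial size and can be guessed nondeterministically. Given a guessed profile $\vec{\sigma}$, the game $(\calG,v_0)$ together with $\vec{\sigma}$ collapses to a finite Markov chain $\calG^{\vec{\sigma}}$. In this Markov chain, I would first compute the payoffs $p_i=\Prob^{\vec{\sigma}}_{v_0}(\Win_i)$ in polynomial time and check that $\vec{x}\le(p_i)_{i\in\Pi}\le\vec{y}$. For all the objective classes listed — B\"uchi, co-B\"uchi, parity, Rabin, Streett and even Muller — computing the probability of an $\omega$-regular event in a Markov chain reduces to identifying the bottom strongly connected components and checking, for each BSCC~$C$, whether $C$ (viewed as an infinite sequence's $\Inf$-set) satisfies the condition, then solving a linear system for the reachability probabilities; all of this is polynomial, and for Muller conditions the BSCC check is a direct membership test in the given family~$\calF$.

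It then remains to verify that $\vec{\sigma}$ is a Nash equilibrium, \ie that no player~$i$ has a profitable deviation. Fixing $\vec{\sigma}_{-i}$ turns $\calG$ into an MDP $\calG_i$ controlled by player~$i$ alone, and the question is whether $\val^{\calG_i}(v_0)>p_i$. As recalled in \cref{table:summary-results} and the accompanying remark, the quantitative decision problem for MDPs is in \PTime for all of B\"uchi, co-B\"uchi, parity, Rabin, Streett and Muller conditions \cite{Chatterjee07,Alfaro97}. Hence the equilibrium condition can be tested, for each of the finitely many players, in polynomial time, which yields the overall \NP bound.

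The main subtlety that I would be most careful about is the verification step for Muller objectives: unlike the two-player case, where Muller games are \PSpace-complete, the deviator controls every non-stochastic vertex in $\calG_i$ and we genuinely fall into the MDP regime, so the polynomial-time algorithms of \cite{Chatterjee07} apply. Provided this is invoked correctly, no further technical work is needed; the remaining verifications are routine polynomial Markov-chain computations and the NP upper bound follows immediately.
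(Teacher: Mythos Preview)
Your proposal is correct and essentially identical to the paper's own proof: hardness is cited from \cite{UmmelsW09}, and membership in \NP is shown by guessing a positional profile, computing each player's payoff in the resulting Markov chain, checking the threshold constraints, and then verifying the equilibrium condition by solving, for each player~$i$, the one-player MDP obtained by fixing $\vec{\sigma}_{-i}$, all in polynomial time via \cite{Chatterjee07,Alfaro97}. The additional remarks you make (BSCC analysis, the Muller-MDP subtlety) are accurate elaborations of steps the paper leaves implicit.
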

\begin{proof}
Hardness was already proven in \cite{UmmelsW09}. To prove membership in \NP, we
give a nondeterministic polynomial-time algorithm for deciding \PosNE. On
input $\calG,v_0,\vec{x},\vec{y}$, the
algorithm simply guesses a positional strategy profile~$\vec{\sigma}$ (which is
basically a mapping $\bigcup_{i\in\Pi} V_i \to V$). Next, the algorithm computes
the payoff~$z_i$ of $\vec{\sigma}$ for each player~$i$ by
computing the probability of the event $\Win_i$ in the Markov chain
$(\calG^{\vec{\sigma}},v_0)$, which arises from~$\calG$ by fixing all
transitions according to $\vec{\sigma}$. Once each $z_i$ is computed, the
algorithm can easily check whether $x_i\leq z_i\leq y_i$.
To check whether~$\vec{\sigma}$ is a Nash equilibrium, the algorithm needs to
compute, for each player~$i$, the value $r_i$ of the MDP
$(\calG^{\vec{\sigma}_{-i}},v_0)$, which arises from $\calG$ by fixing all
transitions but the ones leaving vertices controlled by player~$i$
according to $\vec{\sigma}$ (and imposing the objective $\Win_i$). Clearly,
$\vec{\sigma}$ is a Nash equilibrium iff $r_i\leq z_i$ for each player~$i$.
Since we can compute the value of any MDP (and thus any Markov chain) with
one of the above objectives in polynomial time
\cite{Chatterjee07,Alfaro97}, all these checks can be carried out in polynomial
time.
\end{proof}

To prove the decidability of \StatNE, we appeal to results established
for the \emph{Existential Theory of the Reals}, \ExThR, the set of all
existential first-order sentences (over the appropriate signature) that hold in
$\frakR:=(\bbR,+,\cdot,0,1,\leq)$. The best known upper bound for the
complexity of the associated decision problem is \PSpace
\cite{Canny88}, which leads to the following theorem.

\begin{theorem}\label{thm:statne-pspace}
\StatNE is in \PSpace for SMGs with B\"uchi, co-B\"uchi, parity, Rabin,
Streett, or Muller objective.
\end{theorem}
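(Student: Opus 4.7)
The plan is to reduce \StatNE to the existential theory of the reals \ExThR and invoke Canny's \PSpace decision procedure for \ExThR. Given an instance $(\calG,v_0,\vec x,\vec y)$, I will construct an \ExThR sentence $\varphi$ that is satisfiable iff the instance is positive. The sentence existentially quantifies, for every $i\in\Pi$, $v\in V_i$ and $w\in v\Delta$, a variable $s_{i,v,w}$ encoding $\sigma_i(w\mid v)$; for every $i\in\Pi$ and $v\in V$, a variable $p^i_v$ intended to equal $\Prob^{\vec\sigma}_v(\Win_i)$; and, for every $i\in\Pi$ and $v\in V$, a variable $r^i_v$ intended to equal the value of the MDP $(\calG^{\vec\sigma_{-i}},v)$ with objective $\Win_i$. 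The body of $\varphi$ asserts that each $s_{i,v,\cdot}$ is a probability distribution over the permitted successors, that the $p^i_v$ and $r^i_v$ satisfy the appropriate value equations (described below), that $x_i\leq p^i_{v_0}\leq y_i$, and that $r^i_{v_0}\leq p^i_{v_0}$ for every $i$. The last inequality captures the Nash property: since $r^i_{v_0}$ is, by definition, the highest expected payoff player~$i$ can secure by unilaterally deviating to any (possibly mixed, history-dependent) strategy, $r^i_{v_0}\leq p^i_{v_0}$ is exactly the best-response condition.

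The central task is encoding the value equations. By \Cref{lemma:end-components,lemma:end-component-strategy}, the probability $\Prob^{\vec\sigma}_v(\Win_i)$ equals the probability that the Markov chain $\calG^{\vec\sigma}$ reaches an end component all of whose plays belong to $\Win_i$. Since the edges with positive probability in $\calG^{\vec\sigma}$ are precisely those $(v,w)$ with $v\in V_i$ and $s_{i,v,w}>0$ together with the stochastic edges of $\calG$, reasoning about end components is expressible in \ExThR: one existentially quantifies, for each vertex, $\{0,1\}$-valued flags indicating its role (membership in an end component, in a winning end component, in the attractor of the set of winning end components) and adds polynomial-size constraints that enforce the intended semantics in terms of the support of $\vec\sigma$; the $p^i_v$ are then pinned down by the usual reachability equations $p^i_v=\sum_{w}\vec\sigma(w\mid v)\cdot p^i_w$ outside the winning and losing end components. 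The $r^i_v$-equations are handled analogously, exploiting positional determinacy of Rabin MDPs \cite{ChatterjeeAH05} to obtain a characterisation of the MDP value in which player~$i$'s optimal positional strategy is itself existentially quantified inside $\varphi$; for Streett and Muller objectives, finite-memory strategies suffice, so an auxiliary memory structure is guessed alongside.

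The principal obstacle is keeping the formula succinct enough to fit in polynomial space. For B\"uchi, co-B\"uchi, parity, Rabin and Streett objectives, the end-component/attractor combinatorics has a polynomial-size existential certificate, and the resulting $\varphi$ has polynomial length in the input. For Muller objectives presented as an explicit family $\calF$, the size of $\calF$ is part of the input, so the classification of end components can still be enumerated succinctly. I expect the Muller case to be the most delicate: one must check, for every maximal end component $C$ of $\calG^{\vec\sigma}$, whether the set $C$ belongs to $\calF$, while the MEC structure itself depends nonlinearly on the guessed $\vec\sigma$; arranging the flag variables above so that these checks can be enforced without adding a further quantifier alternation (which would take us outside \ExThR) is where the technical work lies.
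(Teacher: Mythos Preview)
Your high-level plan---reduce \StatNE to \ExThR and invoke Canny's \PSpace procedure---is exactly what the paper does, but you are missing the key idea that makes the reduction go through cleanly. The combinatorial data you need (which vertices lie in winning bottom SCCs of $\calG^{\vec\sigma}$, which vertices can reach them with positive probability, which end components of the MDP $\calG^{\vec\sigma_{-i}}$ are winning) depend \emph{only on the support} of $\vec\sigma$, not on the actual probabilities. The paper exploits this by having the \NPSpace algorithm first guess the support $S\subseteq V\times V$ as a discrete object, then compute the sets $F_i$, $R_i$, $T_i$ by ordinary polynomial-time graph algorithms \emph{before} writing down any real formula. The resulting \ExThR sentence then has only to say: the probabilities $\alpha_{vw}$ realise the guessed support, the payoffs $z^i_v$ satisfy the (now fixed) linear reachability system determined by $F_i$ and $R_i$, and the deviation values $r^i_v$ satisfy the (now fixed) LP for reachability of $T_i$ in the MDP. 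All of this is polynomial size with no further guessing inside the formula.

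Your proposal instead tries to push the end-component analysis \emph{inside} the \ExThR sentence via $\{0,1\}$-valued flag variables. This is precisely the part you flag as ``where the technical work lies'', and you do not actually carry it out. The difficulty is real: enforcing that a guessed set of vertices is a bottom SCC of $\calG^{\vec\sigma}$ requires expressing strong connectivity with respect to the support, and it is not clear how to do this with a polynomial-size purely existential real formula whose constraints \emph{force} the flags to their intended values (rather than merely being consistent with some values). A second, smaller gap concerns the $r^i$ variables: rather than guessing an optimal (finite-memory) strategy for player~$i$, the paper reduces the best-response value to reaching $T_i$ and uses the standard LP characterisation of MDP reachability---any feasible $\vec r$ dominates the true optimum, so $r^i_{v_0}\leq z^i_{v_0}$ already certifies the Nash condition without quantifying over player~$i$'s strategy at all. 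This sidesteps your worry about memory for Streett/Muller objectives entirely.
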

\begin{proof}
Since $\PSpace=\NPSpace$, it suffices to provide a nondeterministic algorithm
with polynomial space requirements for deciding \StatNE. On input $\calG,v_0,
\vec{x},\vec{y}$, where \wlg $\calG$ is an SMG with Muller
objectives $\calF_i\in 2^V$, the algorithm starts by
guessing the \emph{support} $S\subseteq V\times V$ of a stationary strategy
profile~$\vec{\sigma}$ of $\calG$, \ie $S=\{(v,w)\in V\times V:\vec{\sigma}
(w\mid v)>0\}$. From the set~$S$ alone, by standard graph algorithms (see
\cite{Chatterjee07,Alfaro97}), one can
compute (in polynomial time) for each player~$i$ the following sets:
\begin{enumerate}
 \item the union~$F_i$ of all end components (\ie bottom SCCs) $C$ of the
Markov chain~$\calG^{\vec{\sigma}}$ that are winning
for player~$i$, \ie $C\in\calF_i$;

 \item the set~$R_i$ of vertices~$v$ such that
$\Prob_v^{\vec{\sigma}}(\Reach(F_i))>0$;

 \item the union~$T_i$ of all end components of the
MDP~$\calG^{\vec{\sigma}_{-i}}$ that are winning for player~$i$.
\end{enumerate}

After computing all these sets, the algorithm evaluates an existential
first-order sentence~$\psi$, which can be computed in polynomial time from
$\calG$, $v_0$, $\vec{x}$, $\vec{y}$, $(R_i)_{i\in\Pi}$, $(F_i)_{i\in\Pi}$ and
$(T_i)_{i\in\Pi}$ over $\frakR$ and returns the answer to this query.

It remains to describe a suitable sentence $\psi$.
Let $\vec{\alpha}=(\alpha_{vw})_{v,w\in V}$, $\vec{r}=(r^i_v)_{i\in\Pi,v\in V}$
and $\vec{z}=(z^i_v)_{i\in\Pi, v\in V}$ be three sets of variables, and let
$V_\ast=\bigcup_{i \in \Pi} V_i$ be the set of all non-stochastic vertices. The
formula
\begin{align*}
\phi(\vec{\alpha})&:=
\bigwedge_{v\in V_\ast} \bigg( \bigwedge_{w\in v\Delta} \alpha_{vw} \geq 0\
\wedge \bigwedge_{\mathmakebox[0.7cm][c]{w\in V\setminus v\Delta}}
\alpha_{vw}=0\wedge \sum_{w \in v\Delta} \alpha_{vw} = 1\bigg)\,\wedge \\
& \quad
\bigwedge_{\mathmakebox[0.7cm][c]{\substack{v\in V\setminus V_\ast \\ w\in V}}}
\alpha_{vw} = p_{vw}\wedge \bigwedge_{\mathmakebox[0.7cm][c]{(v,w)\in S}}
\alpha_{vw} > 0 \wedge
\bigwedge_{\mathmakebox[0.7cm][c]{(v,w)\not\in S}} \alpha_{vw}=0\,,
\end{align*}
where $p_{vw}$ is the unique number such that $(v,p_{vw},w)\in\Delta$,
states that the mapping~$\vec{\sigma}:V\to\calD(V)$ defined by
$\vec{\sigma}(w\mid v)=\alpha_{vw}$
constitutes a valid stationary strategy profile of $\calG$ whose support
is~$S$. Provided that $\phi(\vec{\alpha})$ holds in $\frakR$, the formula
\[\eta_i(\vec{\alpha},\vec{z}):=
\bigwedge_{v \in F_i} z^i_v= 1 \wedge
\bigwedge_{\mathmakebox[0.7cm][c]{v \in V\setminus R_i}} z^i_v= 0 \wedge
\bigwedge_{\mathmakebox[0.7cm][c]{v \in V\setminus F_i}} z^i_v =
\sum_{w \in v\Delta} \alpha_{vw} z^i_w\]
states that $z^i_v=\Prob_v^{\vec{\sigma}}(\Win_i)$ for each $v\in V$,
where $\vec{\sigma}$ is defined as above. This follows from a well-known
results about Markov chains, namely that the vector of the aforementioned
probabilities is the unique solution of the given system of equations. Finally,
the formula
\[\theta_i(\vec{\alpha},\vec{r}):=
\bigwedge_{v\in V} r^i_v\geq 0 \wedge
\bigwedge_{v\in T_i} r^i_v=1 \wedge
\bigwedge_{\substack{v\in V_i \\ w\in v\Delta}} r^i_v\geq r^i_w\wedge
\bigwedge_{\mathmakebox[0.7cm][c]{v\in V\setminus V_i}}
r^i_v=\sum_{w\in v\Delta}\alpha_{vw} r^i_w\]
states that $\vec{r}$ is a solution of the linear programme for computing
the maximal payoff that player~$i$ can achieve when playing against the
strategy profile~$\vec{\sigma}_{-i}$. In particular, the formula is fulfilled
if $r^i_v=\sup_{\tau}\Prob^{(\vec{\sigma}_{-i},\tau)}_v(\Reach(T_i))=
\sup_{\tau}\Prob^{(\vec{\sigma}_{-i},\tau)}_v(\Win_i)$ (where the latter
equality follows from
\cref{lemma:end-components,lemma:end-component-strategy}), and every other
solution is greater than this one (in each component).

The desired sentence $\psi$ is the existential closure of the conjunction of
$\phi$ and, for each player~$i$, the formulae $\eta_i$ and $\theta_i$ combined
with formulae stating that player~$i$ cannot improve her payoff and that the
expected payoff for player~$i$ lies in between the given thresholds:
\[\psi:=\exists\vec{\alpha}\,\exists\vec{r}\,\exists\vec{z}\,
\big(\phi(\vec{\alpha})\wedge\bigwedge_{i\in\Pi}(\eta_i(\vec{\alpha},\vec{z})
\wedge\theta_i(\vec{\alpha},\vec{r})\wedge r^i_{v_0}\leq z^i_{v_0}\wedge
x_i\leq z^i_{v_0}\leq y_i)\big).\]
It follows that $\psi$ holds in $\frakR$ iff $(\calG,v_0)$ has a
stationary Nash equilibrium with payoff at least $\vec{x}$ and
at most $\vec{y}$ whose support is~$S$. Consequently, the algorithm is correct.
\end{proof}

\section{Equilibria with a binary payoff}

In this section, we prove that \QualNE is decidable. We start by
characterising the existence of a Nash equilibrium with a binary payoff in any
game with prefix-independent objectives.

\subsection{Characterisation of existence}
\label{sect:nash-qual-characterisation}

For a subset $U\subseteq V$, we denote by $\Reach(U)$ the
set $V^*\cdot U\cdot V^\omega$; if $U=\{v\}$, we just write $\Reach(v)$ for
$\Reach(U)$. Finally, given an SMG~$\calG$ and a player~$i$, we denote by
$V_i^{>0}$ the set of all vertices $v\in V$ such that $\val_i^\calG(v)>0$.
The following lemma allows to infer the existence of a Nash equilibrium from 
the existence of a certain strategy profile. The proof uses so-called
\emph{threat strategies}
(also known as \emph{trigger strategies}), which are the basis of the
\emph{folk theorems} in the theory of repeated games
(\cf \cite[Chapter 8]{OsborneR94}).

\begin{lemma}\label{lemma:nash-threat}
Let $\vec{\sigma}$ be a pure strategy profile of $\calG$ such that, for each
player~$i$, $\Prob_{v_0}^{\vec{\sigma}}(\Win_i)=1$ or
$\Prob^{\vec{\sigma}}_{v_0}(\Reach(V_i^{>0}))=0$. Then there exists a
pure Nash equilibrium~$\vec{\sigma}^*$ with
$\Prob_{v_0}^{\vec{\sigma}}=\Prob_{v_0}^{\vec{\sigma}^*}$. If, additionally,
all winning conditions are $\omega$-regular and $\vec{\sigma}$ is
finite-state, then there exists a finite-state Nash
equilibrium~$\vec{\sigma}^*$ with $\Prob_{v_0}^{\vec{\sigma}}=
\Prob_{v_0}^{\vec{\sigma}^*}$. 
\end{lemma}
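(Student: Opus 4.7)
The plan is to realise $\vec{\sigma}^*$ as a trigger-strategy profile in the spirit of the folk theorems: while the history remains consistent with $\vec{\sigma}$, every player obeys $\vec{\sigma}$; the instant some player~$i$ is the first to deviate (an unambiguous event because $\vec{\sigma}$ is pure), all remaining players switch permanently to a punishment profile $\vec{\tau}^{-i}$ of the coalition $\Pi\setminus\{i\}$. Here $\vec{\tau}^{-i}$ is chosen to be a \emph{globally} optimal coalition strategy in the 2SG pitting player~$i$ (with objective $\Win_i$) against the coalition: by Martin determinacy combined with the Horn--Gimbert theorem recalled in the preliminaries, such a $\vec{\tau}^{-i}$ exists and can be chosen pure (the objectives are prefix-independent); and if the objectives are furthermore $\omega$-regular, finite-memory determinacy of $\omega$-regular 2SGs~\cite{ChatterjeeAH05} allows us to pick each $\vec{\tau}^{-i}$ finite-state. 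A trigger profile built from a finite-state $\vec{\sigma}$ and finitely many finite-state punishments is itself finite-state, so the second clause of the lemma will follow once the first one is proved.

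The payoff-preservation is immediate: by construction $\vec{\sigma}$ and $\vec{\sigma}^*$ coincide on the prefix-closed set of histories consistent with $\vec{\sigma}$, which contains all prefixes of plays of positive $\Prob^{\vec{\sigma}}_{v_0}$-probability, so Lemma~\ref{lemma:equivalent-profiles} gives $\Prob^{\vec{\sigma}^*}_{v_0}=\Prob^{\vec{\sigma}}_{v_0}$. For the Nash property I fix player~$i$ and an arbitrary alternative strategy $\tau$; in $(\vec{\sigma}^*_{-i},\tau)$ only player~$i$ can act inconsistently with $\vec{\sigma}$, so the first deviator, if any, is necessarily player~$i$. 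If $\Prob^{\vec{\sigma}}_{v_0}(\Win_i)=1$ the Nash condition is trivial. Otherwise $\Prob^{\vec{\sigma}}_{v_0}(\Reach(V_i^{>0}))=0$, which forces every vertex $v$ reachable under $\vec{\sigma}$ with positive probability to satisfy $\val_i(v)=0$; since any deviation vertex $v\in V_i$ is under player~$i$'s own control, the inequality $\val_i(v)\geq\val_i(w)$ across a controlled-vertex transition forces $\val_i(w)=0$ at every successor $w$ a deviation could lead to.

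I then decompose $\Prob^{\vec{\sigma}^*_{-i},\tau}_{v_0}(\Win_i)$ over the countably many disjoint first-deviation cylinders $xw\cdot V^\omega$ via Lemma~\ref{lemma:residual-strategies}: after the deviation, the residual coalition strategies coincide with $\vec{\tau}^{-i}$, so the residual winning probability from $w$ is at most $\val_i(w)=0$ by the global optimality of $\vec{\tau}^{-i}$; the ``no deviation'' residual contribution equals $\Prob^{\vec{\sigma}}_{v_0}(\Win_i)=0$, again by Lemma~\ref{lemma:equivalent-profiles}. Summing gives $\Prob^{(\vec{\sigma}^*_{-i},\tau)}_{v_0}(\Win_i)=0$, so no deviation improves player~$i$'s payoff. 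The delicate point, and the reason this argument is nontrivial, is that the post-deviation bound must apply from an \emph{arbitrary} deviation vertex~$w$, which is precisely why one needs \emph{globally} optimal punishment strategies rather than the weaker pointwise-from-$v_0$ version --- exactly what the Horn--Gimbert theorem (pure case) and finite-memory determinacy (finite-state case) supply.
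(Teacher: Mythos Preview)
Your construction and overall argument coincide with the paper's: trigger strategies built from globally optimal pure coalition punishments, payoff preservation via Lemma~\ref{lemma:equivalent-profiles}, and the Nash property via a decomposition over first-deviation cylinders together with Lemma~\ref{lemma:residual-strategies}.

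There is, however, a slip in the last paragraph. You assert that in the case $\Prob^{\vec{\sigma}}_{v_0}(\Reach(V_i^{>0}))=0$ one has $\Prob^{\vec{\sigma}}_{v_0}(\Win_i)=0$, and then conclude $\Prob^{\vec{\sigma}^*_{-i},\tau}_{v_0}(\Win_i)=0$. The hypothesis does not give this, and it can fail: take a game where an opponent controls every non-stochastic vertex and from each of them has an edge to a sink losing for player~$i$; then $\val_i\equiv 0$ and $V_i^{>0}=\emptyset$, yet under a specific pure $\vec{\sigma}$ (together with a stochastic branching at the start) player~$i$ may well win with probability~$\tfrac12$. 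Fortunately your own decomposition already yields what is needed without this claim: the no-deviation part contributes at most $\Prob^{\vec{\sigma}}_{v_0}(\Win_i)$ and each deviation cylinder contributes~$0$, hence
$\Prob^{\vec{\sigma}^*_{-i},\tau}_{v_0}(\Win_i)\leq\Prob^{\vec{\sigma}}_{v_0}(\Win_i)=\Prob^{\vec{\sigma}^*}_{v_0}(\Win_i)$,
which is exactly the Nash condition. This is how the paper concludes. Note also that Lemma~\ref{lemma:equivalent-profiles} does not literally apply to compare $(\vec{\sigma}^*_{-i},\tau)$ with $\vec{\sigma}$ on the no-deviation event, since player~$i$'s component differs; the paper handles this bound by first checking it on basic cylinders and then extending via the monotone class theorem.
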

\begin{proof}
Consider the 2SG~$\calG_i=
(\{i,\Pi\setminus\{i\}\},V,V_i,\bigcup_{j\not=i} V_j,\Delta,\Win_i,V^\omega\setminus\Win_i)$ where
player~$i$ plays against the coalition~$\Pi\setminus\{i\}$ of all other
players.
Since the set $\Win_i$ is prefix-independent, there exists a globally
optimal pure strategy~$\tau_i$ for the coalition in this game. For each
player~$j\not=i$, this strategy induces a pure strategy~$\tau_{j,i}$ in
$\calG$.
To simplify notation, we also define $\tau_{i,i}$ to be an arbitrary
finite-state strategy of player~$i$ in $\calG$. Player~$i$'s
strategy~$\sigma^*_i$ in $\vec{\sigma}^*$ is defined as follows:
\[\sigma^*_i(xv)=\begin{cases} \sigma_i(xv) & \text{if
$\Prob^{\vec{\sigma}}_{v_0}(xv\cdot V^\omega)>0$,} \\
\tau_{i,j}(x_2v) & \text{otherwise,}
\end{cases}\]
where, in the latter case, $x=x_1x_2$ with $x_1$ being the longest prefix of
$xv$ such that $\Prob^{\vec{\sigma}}_{v_0}(x_1\cdot V^\omega)>0$ and $j\in\Pi$
being the player that has deviated from $\vec{\sigma}$, \ie $x_1$ ends in
$V_j$; if $x_1$ is empty or ends in a stochastic vertex, we set $j=i$.
Intuitively, $\sigma_i^*$ behaves like $\sigma_i$ as long as no other
player~$j$ deviates from playing $\sigma_j$, in which case $\sigma_i^*$
starts to behave like $\tau_{i,j}$.

If each $\Win_i$ is $\omega$-regular, then $\tau$ can be chosen to be a
finite-state profile. Consequently, each $\tau_{j,i}$ can be assumed to be
finite-state. If additionally $\vec{\sigma}$ is finite-state, it is easy to
see that the strategy profile~$\vec{\sigma}^*$, as defined above, is also
finite-state.

Note that $\Prob^{\vec{\sigma}^*}_{v_0}=\Prob^{\vec{\sigma}}_{v_0}$. We claim
that $\vec{\sigma}^*$ is a Nash
equilibrium of $(\calG,v_0)$. Let $\rho$ be any strategy of player~$i$ in
$\calG$; we need to show that $\Prob^{\vec{\sigma}^*_{-i},\rho}_{v_0}(\Win_i)
\leq\Prob^{\vec{\sigma}^*}_{v_0}(\Win_i)$.

Let us call a history $xvw\in V^*\cdot V_i\cdot V$ a
\emph{deviation history} if $\Prob_{v_0}^{\vec{\sigma}}(xv\cdot V^\omega)>0$,
but $\sigma_i(xv)\not=w$ and $\rho(w\mid xv)>0$; we denote the set of all
deviation histories by~$X$.

\begin{claim*}
$\Prob^{\vec{\sigma}^*_{-i},\rho}_{v_0}(B\setminus X\cdot V^\omega)\leq
\Prob^{\vec{\sigma}}_{v_0}(B)$ for every Borel set~$B$.
\end{claim*}
\begin{proof}
The claim is obviously true for the \emph{basic open sets} $B=w\cdot V^\omega$
(where $w\in V^*$) and thus also for finite, disjoint unions of such sets,
which are precisely the \emph{clopen sets} (\ie sets of the form $W\cdot
V^\omega$ for finite $W\subseteq V^*$). Since the class of clopen sets is
closed under complements and finite unions, by the \emph{monotone class
theorem} \cite{Halmos74},
the closure of the class of all clopen sets under
taking limits of chains contains the smallest $\sigma$-algebra containing all
clopen sets, which is just the Borel $\sigma$-algebra.
Hence, it suffices to show that whenever we are given measurable sets
$A_1,A_2,\ldots\subseteq V^\omega$ with $A_1\subseteq A_2
\subseteq\dots$ or $A_1\supseteq A_2\supseteq\dots$ such that the claim
holds for each $A_n$, then the claim also holds for $\lim_n A_n$, where
$\lim_n A_n=\bigcup_{n\in\bbN} A_n$ or $\lim_n A_n=
\bigcap_{n\in\bbN} A_n$, respectively. 
So assume that $A_1,A_2,\dots\subseteq V^\omega$ is a chain such that
$\Prob^{\vec{\sigma}^*_{-i},\rho}_{v_0}(A_n\setminus X\cdot V^\omega)
\leq\Prob^{\vec{\sigma}}_{v_0}(A_n)$ for each $n\in\bbN$. Clearly,
$(\lim_n A_n)\setminus X\cdot V^\omega=\lim_n (A_n\setminus X\cdot V^\omega)$.
Moreover, since measures are continuous from above and below:
\begin{align*}
 & \Prob^{\vec{\sigma}^*_{-i},\rho}_{v_0}(\lim_n
 (A_n\setminus X\cdot V^\omega)) \\
=\; & \lim_n \Prob^{\vec{\sigma}^*_{-i},\rho}_{v_0}(A_n\setminus X\cdot
 V^\omega) \\
\leq\; & \lim_n \Prob^{\vec{\sigma}}_{v_0}(A_n) \\
=\; & \Prob^{\vec{\sigma}}_{v_0}(\lim_n A_n)\,.\qedhere
\end{align*}
\end{proof}

As usual in probability theory, if $P$ is a probability measure and $A$
and $B$ are measurable sets such that $P(B)>0$, then we denote by
$P(A\mid B)$ the \emph{conditional probability of $A$ given $B$}, defined
by $P(A\mid B)=\frac{P(A\cap B)}{P(B)}$.

\begin{claim*}
$\Prob^{\vec{\sigma}^*_{-i},\rho}_{v_0}(\Win_i\mid xvw\cdot V^\omega)
\leq\val_i^\calG(w)$ for every $xvw\in X$.
\end{claim*}
\begin{proof}
By the definition of the strategies $\tau_{j,i}$, we have that
$\Prob^{(\tau_{j,i})_{j\not=i},\rho}_{v}(\Win_i)\leq \val_i^\calG(v)$ for
every vertex $v\in V$ and \emph{every} strategy $\rho$ of player~$i$.
On the other hand, if $xvw$ is a deviation
history, then for each player~$j$ the residual strategy
$\sigma^*_j[xv]$ is equal to $\tau_{j,i}$ on histories that start
in $w$. Hence, by \cref{lemma:residual-strategies}, and since
the set $\Win_i$ is prefix-independent, we get:
\begin{align*}
& \Prob^{\vec{\sigma}^*_{-i},\rho}_{v_0}(\Win_i\mid xvw\cdot V^\omega) \\
=\; & \Prob^{\vec{\sigma}^*_{-i},\rho}_{v_0}(\Win_i\cap\>xvw\cdot V^\omega)\div
\Prob^{\vec{\sigma}^*_{-i},\rho}_{v_0}(xvw\cdot V^\omega) \\
=\; & \Prob^{\vec{\sigma}^*_{-i}[xv],\rho[xv]}_{w}(\Win_i) \\
=\; & \Prob^{(\tau_{j,i})_{j\not=i},\rho[xv]}_{w}(\Win_i) \\
\leq\; & \val_i^\calG(w)\qedhere
\end{align*}
\end{proof}

Using the previous two claims, we can prove that
$\Prob^{\vec{\sigma}^*_{-i},\rho}_{v_0}(\Win_i)\leq
\Prob^{\vec{\sigma}}_{v_0}(\Win_i)=\Prob^{\vec{\sigma}^*}_{v_0}(\Win_i)$ as 
follows:
\begin{align*}
& \Prob^{\vec{\sigma}^*_{-i},\rho}_{v_0}(\Win_i) \\
=\; &\Prob^{\vec{\sigma}^*_{-i},\rho}_{v_0}(\Win_i\setminus\> X\cdot V^\omega)
+\sum_{\mathmakebox[0.5cm][c]{xvw\in X}}
\Prob^{\vec{\sigma}^*_{-i},\rho}_{v_0}(\Win_i\cap\>xvw\cdot V^\omega) \\
\leq\; &  \Prob^{\vec{\sigma}}_{v_0}(\Win_i)
+\sum_{\mathmakebox[0.5cm][c]{xvw\in X}}
\Prob^{\vec{\sigma}^*_{-i},\rho}_{v_0}(\Win_i\cap\>xvw\cdot V^\omega) \\
=\; & \Prob^{\vec{\sigma}}_{v_0}(\Win_i)
+\sum_{\mathmakebox[0.5cm][c]{xvw\in X}}
\Prob^{\vec{\sigma}^*_{-i},\rho}_{v_0}(\Win_i\mid xvw\cdot V^\omega)
\cdot\Prob^{\vec{\sigma}^*_{-i},\rho}_{v_0}(xvw\cdot V^\omega) \\
\leq\; & \Prob^{\vec{\sigma}}_{v_0}(\Win_i)
+\sum_{\mathmakebox[0.5cm][c]{xvw\in X}} \val_i^\calG(w)
\cdot\Prob^{\vec{\sigma}^*_{-i},\rho}_{v_0}(xvw\cdot V^\omega) \\
\leq\; & \Prob^{\vec{\sigma}}_{v_0}(\Win_i)
+\sum_{\mathmakebox[0.5cm][c]{xvw\in X}} \val_i^\calG(v)
\cdot\Prob^{\vec{\sigma}^*_{-i},\rho}_{v_0}(xvw\cdot V^\omega) \\
=\; & \Prob^{\vec{\sigma}}_{v_0}(\Win_i)\, ,
\end{align*}
where the last equality follows from $\Prob^{\vec{\sigma}}_{v_0}
(\Reach(V_i^{>0}))=0$, which implies that $\val_i^\calG(v)=0$ for each $v\in V$
such that $\Prob^{\vec{\sigma}}_{v_0}(\Reach(v))>0$.
\end{proof}

\noindent
Finally, we can state the main result of this section.

\begin{proposition}\label{prop:qual-nash}
Let $(\calG,v_0)$ be any SMG with prefix-independent winning
conditions, and let $\vec{x}\in\{0,1\}^\Pi$. Then the following statements
are equivalent:
\begin{enumerate}
 \item There exists a Nash equilibrium with payoff~$\vec{x}$;
 \item There exists a strategy profile~$\vec{\sigma}$ with
payoff~$\vec{x}$ such that\linebreak $\Prob^{\vec{\sigma}}_{v_0}(\Reach(V_i^{>0}))=0$
for each player~$i$ with $x_i=0$;
 \item There exists a pure strategy profile~$\vec{\sigma}$ with
payoff~$\vec{x}$ such that\linebreak $\Prob^{\vec{\sigma}}_{v_0}
(\Reach(V_i^{>0}))=0$ for each player~$i$ with $x_i=0$;
 \item There exists a pure Nash equilibrium with payoff~$\vec{x}$.
\end{enumerate}
If additionally all winning conditions are $\omega$-regular, then any of the
above statements is equivalent to each of the following statements:
\begin{enumerate}\setcounter{enumi}{4}
 \item There exists a finite-state strategy profile~$\vec{\sigma}$ with
payoff~$\vec{x}$ such that\linebreak $\Prob^{\vec{\sigma}}_{v_0}
(\Reach(V_i^{>0}))=0$ for each player~$i$ with $x_i=0$;
 \item There exists a finite-state Nash equilibrium with payoff~$\vec{x}$.
\end{enumerate}
\end{proposition}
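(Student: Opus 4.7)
My plan is to close the six conditions into two cycles that meet at~(2): $(1)\Rightarrow(2)\Rightarrow(3)\Rightarrow(4)\Rightarrow(1)$ in general, and $(1)\Rightarrow(2)\Rightarrow(5)\Rightarrow(6)\Rightarrow(1)$ in the $\omega$-regular case. Among these, $(3)\Rightarrow(4)$ and $(5)\Rightarrow(6)$ are direct applications of \cref{lemma:nash-threat}, while $(4)\Rightarrow(1)$, $(6)\Rightarrow(1)$, $(3)\Rightarrow(2)$ and $(5)\Rightarrow(2)$ are immediate (a pure or finite-state Nash equilibrium is a Nash equilibrium, and any pure or finite-state profile satisfying (3) or (5) is a profile satisfying (2)). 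The real content is therefore $(1)\Rightarrow(2)$ and the purification step $(2)\Rightarrow(3)$, together with its finite-state variant $(2)\Rightarrow(5)$.

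For $(1)\Rightarrow(2)$ I use a one-shot deviation argument. Let $\vec{\sigma}$ be a Nash equilibrium with payoff~$\vec{x}$, fix a loser~$i$ with $x_i=0$, and suppose towards a contradiction that $\Prob^{\vec{\sigma}}_{v_0}(\Reach(V_i^{>0}))>0$. Decompose $\Reach(V_i^{>0})$ into disjoint first-visit cylinders $xv\cdot V^\omega$ with $v\in V_i^{>0}$. Because $\Win_i$ is prefix-independent and Borel, the Horn--Gimbert theorem quoted in the excerpt supplies player~$i$ with a globally optimal pure strategy~$\tau^*$. Define a deviation $\tau$ that agrees with $\sigma_i$ until the play first enters $V_i^{>0}$ and plays $\tau^*$ thereafter. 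Combining \cref{lemma:equivalent-profiles,lemma:residual-strategies} with the prefix-independence of $\Win_i$ and the optimality of $\tau^*$, the contribution of each cylinder to $\Prob^{\vec{\sigma}_{-i},\tau}_{v_0}(\Win_i)$ is at least $\Prob^{\vec{\sigma}}_{v_0}(xv\cdot V^\omega)\cdot\val_i^{\calG}(v)>0$, so the total is strictly positive. This contradicts $\Prob^{\vec{\sigma}}_{v_0}(\Win_i)=0$ together with the Nash property of~$\vec{\sigma}$.

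For $(2)\Rightarrow(3)$ I purify by collapsing all players into one. Given a profile $\vec{\sigma}$ satisfying~(2), let $U\subseteq V$ be the set of vertices that $\vec{\sigma}$ reaches from $v_0$ with positive probability. The hypothesis forces $U\cap V_i^{>0}=\emptyset$ for every loser~$i$, and a routine check shows that $U$ is a subarena: stochastic vertices of $U$ send all probability to $U$, and every controlled vertex of $U$ has at least one $\vec{\sigma}$-used successor in~$U$. Now regard the restriction of $\calG$ to~$U$ as a one-player MDP in which a single ``super-player'' owns all non-stochastic vertices and aims at the combined objective
\[ B := \bigcap_{i:\,x_i=1} \Win_i\;\cap\;\bigcap_{j:\,x_j=0}\bigl(V^\omega\setminus\Win_j\bigr). \]
Being a finite intersection of prefix-independent Borel sets, $B$ is itself prefix-independent and Borel, and the super-player strategy corresponding to $\vec{\sigma}$ witnesses value~$1$ at $v_0$. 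Horn--Gimbert applied to this MDP then yields a pure globally optimal super-player strategy~$\sigma^*$. Splitting $\sigma^*$ by vertex ownership (and extending arbitrarily on $V\setminus U$) produces a pure profile~$\vec{\sigma}'$ whose plays from $v_0$ almost surely satisfy $B$ and stay inside~$U$; hence $\vec{\sigma}'$ has payoff~$\vec{x}$ and, because $U\cap V_j^{>0}=\emptyset$, also satisfies $\Prob^{\vec{\sigma}'}_{v_0}(\Reach(V_j^{>0}))=0$ for every loser~$j$. The step $(2)\Rightarrow(5)$ is entirely parallel, invoking instead the existence of pure \emph{finite-state} optimal strategies in MDPs with $\omega$-regular objectives.

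The delicate step is $(2)\Rightarrow(3)$: one needs to justify applying Horn--Gimbert, which is quoted in the excerpt as a per-player statement for SMGs, to the \emph{single} combined objective of the super-player MDP (using that a finite intersection of prefix-independent Borel sets is again prefix-independent Borel), and then to verify that splitting the pure super-player strategy back into individual players' strategies preserves the induced probability measure, hence the ``stays in~$U$'' guarantee and the avoidance of $V_j^{>0}$ for every loser~$j$.
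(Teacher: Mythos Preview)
Your proof is correct and follows essentially the same route as the paper's: both handle $(1)\Rightarrow(2)$ by letting a loser~$i$ deviate to an optimal strategy once a vertex of $V_i^{>0}$ is reached, and both handle $(2)\Rightarrow(3)$ (and $(2)\Rightarrow(5)$) by passing to a one-player MDP with the combined prefix-independent (resp.\ $\omega$-regular) objective and invoking the existence of pure (resp.\ finite-state) optimal strategies there, before applying \cref{lemma:nash-threat} for $(3)\Rightarrow(4)$ and $(5)\Rightarrow(6)$. The only cosmetic difference is the choice of arena for the auxiliary MDP---the paper takes $V\setminus\bigcup_{x_i=0}V_i^{>0}$, whereas you take the $\vec{\sigma}$-reachable set~$U$---but both are subarenas disjoint from each loser's $V_i^{>0}$, so either works.
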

\begin{proof}
$(1.\Rightarrow 2.)$ Let $\vec{\sigma}$ be a Nash equilibrium with
payoff~$\vec{x}$. We claim that $\vec{\sigma}$ is already the strategy profile
we are looking for: $\Prob^{\vec{\sigma}}_{v_0}(\Reach(V_i^{>0}))=0$ for each
player~$i$ with $x_i=0$.
Towards a contradiction, assume that $\Prob^{\vec{\sigma}}_{v_0}
(\Reach(V_i^{>0}))>0$ for some player~$i$ with $x_i=0$. Since $V$ is finite,
there exists a vertex $v\in V_i^{>0}$ and a history $x$ such that
$\Prob^{\vec{\sigma}}_{v_0}(xv\cdot V^\omega)>0$. Let $\tau$ be
an optimal strategy for player~$i$ in the game $(\calG,v)$, and consider
her strategy~$\sigma'$ defined by
\[\sigma'(yw)=\begin{cases}\sigma(yw) & \text{if $xv\npreceq yw$,} \\
\tau(y'w) & \text{otherwise,}
\end{cases}\]
where, in the latter case, $y=x y'$. Clearly, $\Prob^{\vec{\sigma}}_{v_0}
(xv\cdot V^\omega)=\Prob^{(\vec{\sigma}_{-i},\sigma')}_{v_0}
(xv\cdot V^\omega)$. Moreover, $\Prob^{\vec{\sigma}}_{v_0}
(\Win_i\setminus xv\cdot V^\omega)=\Prob^{(\vec{\sigma}_{-i},\sigma')}_{v_0}
(\Win_i\setminus xv\cdot V^\omega)$: this follows from
\cref{lemma:equivalent-profiles} by taking $X=V^*\setminus xv\cdot V^*$.
Using \cref{lemma:residual-strategies}, we can infer that
$\Prob_{v_0}^{(\vec{\sigma}_{-i},\sigma')}(\Win_i)>0$ as follows:
\begin{align*}
& \Prob_{v_0}^{(\vec{\sigma}_{-i},\sigma')}(\Win_i) \\
=\; & \Prob_{v_0}^{(\vec{\sigma}_{-i},\sigma')}(\Win_i\cap xv\cdot V^\omega)
 +\Prob_{v_0}^{(\vec{\sigma}_{-i},\sigma')}(\Win_i\setminus xv\cdot V^\omega)\\
=\; & \Prob_{v_0}^{(\vec{\sigma}_{-i},\sigma')}(xv\cdot V^\omega)
 \cdot\Prob_{v}^{(\vec{\sigma}_{-i},\sigma')[x]}(\Win_i)
 +\Prob_{v_0}^{\vec{\sigma}}(\Win_i\setminus xv\cdot V^\omega) \\
=\; & \Prob_{v_0}^{\vec{\sigma}}(xv\cdot V^\omega)
 \cdot\Prob_{v}^{(\vec{\sigma}_{-i}[x],\tau)}(\Win_i)
 +\Prob_{v_0}^{\vec{\sigma}}(\Win_i\setminus xv\cdot V^\omega) \\
\geq\; & \Prob_{v_0}^{\vec{\sigma}}(xv\cdot V^\omega)\cdot\val_i^\calG(v)
 +\Prob_{v_0}^{\vec{\sigma}}(\Win_i\setminus xv\cdot V^\omega) \\
>\; & 0
\end{align*}
Hence, player~$i$ can improve her payoff by playing $\sigma'$ instead of 
$\sigma_i$, a contradiction to the fact that $\vec{\sigma}$ is a Nash
equilibrium.

$(2.\Rightarrow 3.)$ Let $\vec{\sigma}$ be a strategy profile of $(\calG,v_0)$
with payoff~$\vec{x}$ such that $\Prob^{\vec{\sigma}}_{v_0}
(\Reach(V_i^{>0}))=0$ for each player~$i$ with $x_i=0$. Consider the MDP
$\calM$ that is obtained from $\calG$ by removing all vertices~$v\in V$ such
that $v\in V_i^{>0}$ for some player~$i$ with $x_i=0$, merging all players
into one, and imposing the objective
\[\Win=\bigcap\limits_{\substack{i\in\Pi \\ x_i=1}} \Win_i\cap
  \bigcap\limits_{\substack{i\in\Pi \\ x_i=0}} V^\omega\setminus\Win_i\,.\]
The MDP~$\calM$ is well-defined since its domain is a subarena of $\calG$.
Moreover, the value $\val^{\calM}(v_0)$ of $\calM$ is equal to~1
because the strategy profile~$\vec{\sigma}$ induces a
strategy~$\sigma$ in $\calM$ satisfying $\Prob_{v_0}^\sigma(\Win)=1$.
Since each $\Win_i$ is prefix-independent, so is the set $\Win$.
Hence, there exists a pure, optimal strategy~$\tau$ in $(\calM,v_0)$. Since
the value is~1, we have $\Prob_{v_0}^\tau(\Win)=1$, and $\tau$ induces
a pure strategy profile of $\calG$ with the desired properties.

$(3.\Rightarrow 4.)$ Let $\vec{\sigma}$ be a pure strategy
profile of $(\calG,v_0)$ with payoff~$\vec{x}$ such that
$\Prob^{\vec{\sigma}}_{v_0}(\Reach(V_i^{>0}))=0$ for each player~$i$ with
$x_i=0$. By \cref{lemma:nash-threat}, there exists a pure Nash
equilibrium~$\vec{\sigma}^*$ of $(\calG,v_0)$ with
$\Prob^{\vec{\sigma}}_{v_0}=\Prob^{\vec{\sigma}^*}_{v_0}$. In particular,
$\vec{\sigma}^*$ has payoff~$\vec{x}$.

$(4.\Rightarrow 1.)$ Trivial.

Under the additional assumption that all winning conditions are
$\omega$-regular, the implications $(2.\Rightarrow 5.)$ and
$(5.\Rightarrow 6.)$ are proven analogously; the implication
$(6.\Rightarrow 1.)$ is trivial.
%
%
%
\end{proof}

As an immediate consequence of \cref{prop:qual-nash}, we can conclude that
finite-state strategies are as powerful as arbitrary mixed strategies as far as
the existence of a Nash equilibrium with a binary payoff in SMGs with
$\omega$-regular objectives is concerned. (This is not true for Nash equilibria
with a non-binary payoff \cite{Ummels08}.)

\begin{corollary}\label{cor:nash-finite-state}
Let $(\calG,v_0)$ be any SMG with $\omega$-regular objectives, and
let $x\in\{0,1\}^\Pi$. There exists a Nash equilibrium of $(\calG,v_0)$ with
payoff~$\vec{x}$ iff there exists a finite-state Nash equilibrium of
$(\calG,v_0)$ with payoff~$\vec{x}$.
\end{corollary}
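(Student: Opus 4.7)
My plan is to reduce to the setting of \cref{prop:qual-nash} by the standard trick of making the winning conditions prefix-independent via a product construction with deterministic parity automata. This is possible because every $\omega$-regular language is recognised by some deterministic parity automaton, and parity conditions are prefix-independent.

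First, for each player $i \in \Pi$, I would fix a deterministic parity automaton $\calA_i = (Q_i, V, \delta_i, q_i^0, \Omega_i)$ over the alphabet $V$ recognising $\Win_i$. I would then build the product SMG $\calG'$ whose vertices are pairs $(v, \vec{q})$ with $v \in V$ and $\vec{q} \in \prod_{i\in\Pi} Q_i$, where player $i$ controls $(v, \vec{q})$ iff $v \in V_i$, stochastic vertices remain stochastic with the same probabilities, and transitions update the automaton components deterministically according to the $V$-component of the target vertex. Player $i$'s objective in $\calG'$ is the parity condition induced by $\Omega_i$ on the $i$-th automaton component; this is prefix-independent and $\omega$-regular. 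The designated initial vertex is $v_0' := (v_0, (q_i^0)_{i\in\Pi})$.

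Next I would exhibit the natural bijective correspondence between (mixed) strategy profiles of $(\calG, v_0)$ and of $(\calG', v_0')$: a profile $\vec{\sigma}$ of $\calG$ lifts to a profile $\vec{\sigma}'$ of $\calG'$ by forgetting the automaton components in the history, and a profile of $\calG'$ projects to one of $\calG$ by augmenting the history with the (deterministically determined) automaton states. Because the automata components evolve deterministically from the $V$-history, this correspondence preserves the induced probability measure on infinite $V$-sequences, and hence the expected payoff vector. It also preserves the Nash equilibrium property, since a profitable unilateral deviation in one game yields a profitable deviation in the other. Crucially, a finite-state strategy of $\calG'$ projects to a finite-state strategy of $\calG$: the additional memory required is precisely the product of the finitely many $Q_i$.

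With this correspondence in hand, I would apply \cref{prop:qual-nash} to $(\calG', v_0')$, whose objectives are prefix-independent and $\omega$-regular. The proposition delivers the equivalence of statement $(1)$ (arbitrary Nash equilibrium with payoff $\vec{x}$) and statement $(6)$ (finite-state Nash equilibrium with payoff $\vec{x}$) for $(\calG', v_0')$. Transferring both sides back along the correspondence yields the corollary. I do not expect any genuine obstacle: the only work is the routine verification that the product preserves payoffs, preserves the equilibrium property, and maps finite-state profiles to finite-state profiles in both directions.
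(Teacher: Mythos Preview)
Your proposal is correct and follows exactly the approach the paper takes: the paper's proof is a one-liner invoking \cref{prop:qual-nash} together with ``the fact that every SMG with $\omega$-regular objectives can be reduced to one with prefix-independent $\omega$-regular (\eg parity) objectives,'' and you have simply spelled out the details of that reduction (the product with deterministic parity automata and the payoff- and equilibrium-preserving correspondence of strategy profiles).
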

\begin{proof}
The claim follows from \cref{prop:qual-nash} and the fact that every SMG with
$\omega$-regular objectives can be reduced to one with
prefix-independent $\omega$-regular (\eg parity) objectives.
\end{proof}

\subsection{Computational complexity}

We can now describe an algorithm for deciding \QualNE for games with Muller
objectives. The algorithm relies on the characterisation we gave in
\cref{prop:qual-nash}, which allows to reduce the problem to a problem
about a certain MDP.

Formally, given an SMG
$\calG=(\Pi,V,(V_i)_{i\in\Pi},\Delta,(\calF_i)_{i\in\Pi})$ with Muller
objectives
$\calF_i\subseteq 2^V$, and a binary payoff $\vec{x}\in\{0,1\}^\Pi$, we
define the Markov decision process~$\calG(\vec{x})$ as follows: Let
$Z\subseteq V$ be the set of all $v$ such that $\val_i^\calG(v)=0$ for each
player~$i$ with $x_i=0$; the set of vertices of $\calG(\vec{x})$ is precisely
the set~$Z$, with the set of vertices controlled by player~0
being $Z_0:=Z\cap\bigcup_{i\in\Pi} V_i$. (If $Z=\emptyset$, we define
$\calG(\vec{x})$ to be a trivial MDP with the empty set as its objective.)
The transition relation
of~$\calG(\vec{x})$ is the restriction of $\Delta$ to transitions between
$Z$-states. Note that the transition relation of $\calG(\vec{x})$ is
well-defined since $Z$ is a subarena of $\calG$. We say that a
subset $U\subseteq V$ has payoff~$\vec{x}$ if $U\in\calF_i$ for each
player~$i$ with $x_i=1$ and $U\not\in\calF_i$ for each player~$i$ with
$x_i=0$. The objective of $\calG(\vec{x})$ is $\Reach(T)$ where
$T\subseteq Z$ is the union of all end components $U\subseteq Z$ that have
payoff~$\vec{x}$.

\begin{lemma}\label{lemma:nash-muller}
Let $(\calG,v_0)$ be any SMG with Muller objectives, and let $\vec{x}
\in\{0,1\}^\Pi$. Then $(\calG,v_0)$ has a Nash equilibrium with
payoff~$\vec{x}$ iff $\val^{\calG(\vec{x})}(v_0)=1$.
\end{lemma}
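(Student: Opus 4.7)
The plan is to use \cref{prop:qual-nash} to replace the Nash-equilibrium condition with the (equivalent, since Muller objectives are prefix-independent) existence of a strategy profile $\vec{\sigma}$ with payoff $\vec{x}$ that avoids $\Reach(V_i^{>0})$ almost surely for every loser $i$. Observe that the vertex set $Z$ of $\calG(\vec{x})$ is exactly the largest set avoiding all such $V_i^{>0}$, and that $\calG(\vec{x})$ collapses all choices into one MDP controller since whoever moves in $Z$ is irrelevant for the equilibrium question. So the entire lemma reduces to showing that such a strategy profile exists in $\calG$ iff the reachability objective $\Reach(T)$ has value~$1$ in the MDP $\calG(\vec{x})$.

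For the forward direction, start with $\vec{\sigma}$ given by \cref{prop:qual-nash}(2). Because $\Prob^{\vec{\sigma}}_{v_0}(\Reach(V\setminus Z))=0$, almost every play stays in $Z$; by \cref{lemma:end-components}, almost every play satisfies $\Inf(\alpha)$ is an end component, which therefore lies inside $Z$ and hence is an end component of $\calG(\vec{x})$. Prefix-independence of the Muller conditions together with payoff $\vec{x}$ forces this end component to belong to $\calF_i$ for exactly the players with $x_i=1$, so it is contained in $T$. Combining the moves of all players in $Z$ into a single MDP strategy yields a strategy in $\calG(\vec{x})$ under which $\Reach(T)$ has probability~$1$, giving $\val^{\calG(\vec{x})}(v_0)=1$.

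For the backward direction, use that reachability objectives in MDPs admit a (pure positional) optimal strategy attaining the value, so from $\val^{\calG(\vec{x})}(v_0)=1$ we obtain a strategy $\sigma^*$ with $\Prob^{\sigma^*}_{v_0}(\Reach(T))=1$. Split $\sigma^*$ among the players of $\calG$ (whoever owns the current vertex plays $\sigma^*$'s move); once play enters some end component $C\subseteq T$ with payoff $\vec{x}$, switch to the stationary profile provided by \cref{lemma:end-component-strategy} that makes $\Inf(\alpha)=C$ almost surely. The resulting profile $\vec{\sigma}$ stays in $Z$ throughout and satisfies $\Inf(\alpha)=C$ for some $C$ of payoff $\vec{x}$ almost surely, so it has payoff $\vec{x}$. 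Since $Z$ is disjoint from every $V_i^{>0}$ with $x_i=0$, it trivially verifies the hypothesis of \cref{prop:qual-nash}(2), and an application of that proposition produces the desired Nash equilibrium.

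The only slightly delicate point is verifying that the end components used on both sides are truly the same objects: in the forward direction, end components of $\calG$ that live in $Z$ are end components of $\calG(\vec{x})$ because the restriction of $\Delta$ to $Z$ keeps stochastic closure intact (stochastic vertices in $Z$ only have successors in $Z$, by definition of $Z$ and the fact that $\val_i^\calG$ is preserved under stochastic moves), and conversely in the backward direction. Once this is checked, the rest is an assembly of the preceding lemmas.
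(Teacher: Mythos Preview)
Your proof is correct and follows essentially the same approach as the paper: both directions hinge on \cref{prop:qual-nash} to trade the Nash-equilibrium condition for a strategy profile that stays in~$Z$, then invoke \cref{lemma:end-components} (forward) and \cref{lemma:end-component-strategy} (backward) to pass between such profiles and value-$1$ strategies in the reachability MDP~$\calG(\vec{x})$. Your final paragraph even spells out the correspondence between end components of~$\calG$ inside~$Z$ and end components of~$\calG(\vec{x})$, a point the paper leaves implicit.
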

\begin{proof}
$(\Rightarrow)$ Assume that $(\calG,v_0)$ has a Nash equilibrium with
payoff~$\vec{x}$. By \cref{prop:qual-nash}, this implies that there exists a
strategy profile~$\vec{\sigma}$ of $(\calG,v_0)$ with payoff~$\vec{x}$ such
that $\Prob_{v_0}^{\vec{\sigma}}(\Reach(V\setminus Z))=0$.
We claim that $\Prob^{\vec{\sigma}}_{v_0}(\Reach(T))=1$.
Otherwise, by \cref{lemma:end-components}, there would exist an end component
$C\subseteq Z$ such that $C\not\in\calF_i$ for some player~$i$ with
$x_i=1$ or $C\in\calF_i$ for some some player~$i$ with $x_i=0$, and
$\Prob^{\vec{\sigma}}_{v_0}(\{\alpha\in V^\omega:\Inf(\alpha)=C\})>0$. But 
then, $\vec{\sigma}$ cannot have payoff~$\vec{x}$, a contradiction.
Now, since $\Prob_{v_0}^{\vec{\sigma}}(\Reach(V\setminus Z))=0$, $\vec{\sigma}$
induces a strategy~$\sigma$ in $\calG(\vec{x})$ such that
$\Prob^\sigma_{v_0}(B)=\Prob^{\vec{\sigma}}_{v_0}(B)$ for every Borel set
$B\subseteq Z^\omega$. In particular, $\Prob^\sigma_{v_0}(\Reach(T))=1$ and
hence $\val^{\calG(\vec{x})}(v_0)=1$.

$(\Leftarrow)$ Assume that $\val^{\calG(\vec{x})}(v_0)=1$ (in particular,
$v_0\in Z$), and let $\sigma$ be an optimal strategy in $(\calG(\vec{x}),v_0)$.
From $\sigma$, using \cref{lemma:end-component-strategy}, we can devise a
strategy~$\sigma'$ such that $\Prob^{\sigma'}_{v_0}
(\{\alpha\in V^\omega:\text{$\Inf(\alpha)$ has payoff~$\vec{x}$}\})=1$.
Finally, $\sigma'$ can can be extended to
a strategy profile $\vec{\sigma}$ of $\calG$ with payoff $\vec{x}$ such
that $\Prob_{v_0}^{\vec{\sigma}}(\Reach(V\setminus Z))=0$. By 
\cref{prop:qual-nash}, this implies that $(\calG,v_0)$ has a Nash equilibrium
with payoff~$\vec{x}$.
\end{proof}

Since the value of an MDP with reachability objectives can be computed in
polynomial time (via linear programming,
\cf \cite{Puterman94}), the difficult part lies in computing the MDP
$\calG(\vec{x})$ from $\calG$ and $\vec{x}$ (\ie its domain~$Z$ and
the target set~$T$).

\begin{theorem}\label{thm:nash-muller}
\QualNE is in \PSpace for games with Muller objectives.
\end{theorem}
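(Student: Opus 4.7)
The plan is to turn Lemma~\ref{lemma:nash-muller} into an algorithm: on input $(\calG,v_0,\vec{x})$, compute (a representation of) the reachability MDP $\calG(\vec{x})$ and then decide whether $\val^{\calG(\vec{x})}(v_0)=1$. Since reachability values in MDPs can be computed in polynomial time via linear programming, the whole complexity is absorbed into constructing $\calG(\vec{x})$, i.e.\ into computing its domain $Z$ and its target set $T$. I will show that both tasks can be carried out in \PSpace, which, together with the polynomial-time MDP computation, yields the claimed bound (using $\PSpace=\NPSpace$ and composition of \PSpace procedures).

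First, I would compute $Z=\{v\in V:\val_i^\calG(v)=0\text{ for every player }i\text{ with }x_i=0\}$. For each vertex $v$ and each losing player $i$, the question $\val_i^\calG(v)=0$ is a qualitative value question in the two-player zero-sum Muller game $\calG_i$ of Lemma~\ref{lemma:nash-threat}, which by Table~\ref{table:summary-results} is decidable in \PSpace. Iterating over the (polynomially many) pairs $(v,i)$ and calling this \PSpace subroutine identifies $Z$ within \PSpace.

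Next, I would compute $T$, the union of all end components $U\subseteq Z$ of $\calG(\vec{x})$ that have payoff~$\vec{x}$ (that is, $U\in\calF_i\Leftrightarrow x_i=1$). For every vertex $v\in Z$ I guess a candidate set $U\subseteq Z$ with $v\in U$, verify in polynomial time that $U$ is an end component of the subarena induced by $Z$, and verify that $U$ has payoff~$\vec{x}$ by checking, for each player $i$, whether $U\in\calF_i$; the latter is polynomial once the Muller condition is given in explicit form (and in general still lies in \PSpace for standard succinct representations). This is an $\NPSpace=\PSpace$ procedure that recognises membership of $v$ in $T$. Having isolated $Z$ and $T$, I reduce the value of the reachability MDP on $Z$ with target $T$ to a linear program of polynomial size, solve it, and accept iff the optimal value at $v_0$ is~$1$; correctness is guaranteed by Lemma~\ref{lemma:nash-muller}.

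The main obstacle is the first step: computing $Z$ inherently requires solving two-player zero-sum Muller games, which is already \PSpace-complete, so no algorithm along these lines can do better than \PSpace. The rest of the construction is comparatively routine; the only subtlety is that one must not try to enumerate end components explicitly (there may be exponentially many), which is why I phrase the identification of $T$ as a nondeterministic procedure that checks, for each fixed vertex, whether \emph{some} end component witnessing membership in $T$ exists.
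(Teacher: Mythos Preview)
Your proposal is correct and follows essentially the same approach as the paper's proof: compute $Z$ via the \PSpace\ qualitative decision procedure for two-player zero-sum Muller games (Table~\ref{table:summary-results}), identify $T$ by nondeterministically guessing witnessing end components rather than enumerating them, and then evaluate the resulting reachability MDP in polynomial time, with correctness supplied by Lemma~\ref{lemma:nash-muller}. The only cosmetic difference is that the paper guesses a candidate $T'\subseteq T$ together with end-component witnesses in a single nondeterministic step and checks whether the MDP value with target $T'$ is already~$1$, whereas you compute $T$ exactly via a per-vertex \PSpace\ membership test before solving the MDP.
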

\begin{proof}
Since $\PSpace=\NPSpace$, it suffices to give a nondeterministic algorithm
with polynomial space requirements. On input~$\calG,v_0,\vec{x}$, the
algorithm starts by computing for each player~$i$ with $x_i=0$ the set of
vertices~$v$ with $\val_i^\calG(v)=0$, which can be done in polynomial space
(see \cref{table:summary-results}). The intersection of these sets is the
domain~$Z$ of the Markov decision process~$\calG(\vec{x})$. If $v_0$ is not
contained in this intersection, the algorithm immediately rejects. Otherwise,
the algorithm proceeds by guessing a set~$T'\subseteq Z$ and for each
$v\in T'$ a set $U_v\subseteq Z$ with $v\in U_v$. If, for each $v\in T'$, the
set $U_v$ is an end component with payoff $\vec{x}$, the algorithm proceeds by
computing (in polynomial time) the value $\val^{\calG(\vec{x})}(v_0)$ of the
MDP~$\calG(\vec{x})$ with $T'$ substituted for $T$ and accepts if the value
is~1. In all other cases, the algorithm rejects.

The correctness of the algorithm follows from \cref{lemma:nash-muller} and
the fact that $\Prob^{\sigma}_{v_0}(\Reach(T'))\leq\Prob^{\sigma}_{v_0}
(\Reach(T))$ for any strategy $\sigma$ in $\calG(\vec{x})$ and any subset
$T'\subseteq T$.
\end{proof}

Since any SMG with $\omega$-regular can effectively be reduced to one with
Muller objectives, \cref{thm:nash-muller} implies the decidability
of \QualNE for games with arbitrary $\omega$-regular objectives (\eg given by
\SOneS formulae).
Regarding games with Muller objectives, a matching \PSpace-hardness result
appeared in \cite{HunterD05}, where it was shown that the qualitative decision
problem for 2SGs with Muller objectives is \PSpace-hard, even for games without
stochastic vertices. However, this result relies on the use of arbitrary
colourings.

To solve \QualNE for games with Streett objectives, we will make use
of the following procedure $\StreettEC(U)$, which computes for a game $\calG$
with Streett objectives $\Omega_i$, $i\in\Pi$, and a binary payoff~$\vec{x}\in
\{0,1\}^\Pi$ the union of all end components with payoff~$\vec{x}$ that are
contained in $U\subseteq V$.
\begin{tabbing}
\hspace*{1em}\=\hspace{1em}\=\hspace{1em}\=\hspace{1em}\=
\hspace{1em}\=\hspace{1em}\= \kill

\+\textbf{procedure} $\StreettEC(U)$ \\
$Z:=\emptyset$ \\
Compute (in polynomial time) all end components of $\calG$ maximal in $U$ \\
\+\textbf{for each} such end component $C$ \textbf{do} \\
$S:=\{i\in\Pi:\text{$x_i=1$ and ex.\ $(F,G)\in\Omega_i$ s.th.\ 
  $C\cap F\not=\emptyset$ and $C\cap G=\emptyset$}\}$ \\
$R:=\{i\in\Pi:\text{$x_i=0$ and ($C\cap F=\emptyset$ or
  $C\cap G\not=\emptyset$) for all $(F,G)\in\Omega_i$}\}$ \\
\+\textbf{if} $S=R=\emptyset$ \textbf{then} \\
$Z:=Z\cup C$ \-\\
\+\textbf{else if} $S\not=\emptyset$ \textbf{then} \\
$Y:=C\cap\bigcap_{i\in S}\bigcap_{(F,G)\in\Omega_i,C\cap G=\emptyset}
  C\setminus F$ \\
$Z:=Z\cup\StreettEC(Y)$ \-\\
\+\textbf{else if} $R\not=\emptyset$ and $C\cap F\not=\emptyset$ for
all $(F,G)\in\Omega _i$, $i\in R$ \textbf{then} \\
$Y:=C\cap\bigcap_{i\in R}\bigcap_{(F,G)\in\Omega_i} C\setminus G$ \\
$Z:=Z\cup\StreettEC(Y)$ \-\\
\textbf{end if} \-\\
\textbf{end for} \\
\textbf{return} $Z$ \-\\
\textbf{end procedure}
\end{tabbing}

Note that on input~$U$, $\StreettEC$ calls itself at most $|U|$ times; hence,
the procedure runs in polynomial time.
Moreover, we can obtain a polynomial-time procedure $\RabinEC$ that computes
the same output for games with Rabin objectives~$\Omega_i$ by switching
$x_i=0$ and $x_i=1$ in the definitions of~$S$ and~$R$.

\begin{theorem}\label{thm:nash-streett}
\QualNE is \NP-complete for games with Streett objectives.
\end{theorem}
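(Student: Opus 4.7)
The plan is to obtain \NP-hardness from the SSMG-based reductions of \cite{UmmelsW09}, since reachability objectives are a special case of Streett, and to match it with an \NP algorithm that adapts the approach of \cref{thm:nash-muller}. Following \cref{prop:qual-nash} and \cref{lemma:nash-muller}, the problem reduces to computing the MDP $\calG(\vec{x})$ (with domain $Z:=\bigcap_{i:x_i=0} V_i^{=0}$ and target $T$ equal to the union of end components of $\calG|_Z$ with payoff~$\vec{x}$) and asking whether its reachability value from~$v_0$ is~$1$. Both this MDP reachability value (via linear programming) and the target $T$ (via the procedure $\StreettEC$ stated just before the theorem) are computable in polynomial time, so the only genuinely nondeterministic step is obtaining the subarena~$Z$.

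The key to doing so in \NP is that for each player $i$ with $x_i=0$, the auxiliary two-player zero-sum game $\calG_i$ in which the coalition $\Pi\setminus\{i\}$ plays against~$i$ gives the coalition a \emph{Rabin} objective, and such games admit a positional globally optimal strategy for the coalition \cite{ChatterjeeAH05}. The \NP algorithm therefore guesses, for each such player~$i$, a positional coalition strategy $\tau^{(i)}$; fixing $\tau^{(i)}$ turns $\calG$ into an MDP with a Streett objective for player~$i$, and the set $U_i$ of vertices of positive value there is computable in polynomial time. The algorithm then sets $Z:=V\setminus\bigcup_{i:x_i=0} U_i$, rejects unless $v_0\in Z$, computes $T:=\StreettEC(Z)$ inside $\calG|_Z$, and accepts iff the reachability value of the resulting MDP $\calG(\vec{x})$ from~$v_0$ is~$1$.

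Correctness rests on \cref{prop:qual-nash} together with \cref{lemma:nash-muller}-style reasoning. For soundness, a suboptimal coalition guess can only enlarge~$U_i$ and hence shrink~$Z$, so any accepting run produces a strategy in $\calG(\vec{x})$ that, extended via \cref{lemma:end-component-strategy}, yields a strategy profile of $\calG$ whose play stays almost surely inside $\bigcap_{i:x_i=0} V_i^{=0}$, has payoff~$\vec{x}$, and therefore witnesses a Nash equilibrium by \cref{prop:qual-nash}. For completeness, guessing globally optimal positional coalition strategies makes $U_i=V_i^{>0}$ exactly, so $Z$ equals the true intersection of zero-value sets; the equilibrium supplied by the hypothesis then induces a strategy in $\calG(\vec{x})$ that reaches~$T$ almost surely, forcing its value to equal~$1$.

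The main obstacle is precisely this polynomial-size certificate for $Z$: positional optimality on the Rabin side is indispensable, for without it one would have to guess finite-memory coalition strategies or invoke a \coNP oracle (recall that Streett values of 2SGs are \coNP-complete), neither of which would keep the overall decision procedure inside \NP.
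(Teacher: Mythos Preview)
Your \NP-membership argument is correct and essentially identical to the paper's: both guess, for each losing player~$i$, a positional strategy for the coalition in $\calG_i$ (exploiting that the coalition's objective there is Rabin, so positional optimal strategies exist), use these to certify the zero-value region, run $\StreettEC$ to obtain the target set, and test whether the reachability value in the resulting MDP equals~$1$. The only cosmetic difference is that the paper additionally guesses a subarena $Z'$ and verifies $\val^{\tau_i}(v)=1$ for every $v\in Z'$, whereas you compute $Z$ directly as $V\setminus\bigcup_i U_i$; your soundness and completeness reasoning matches the paper's.

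Your hardness argument, however, is a genuine gap. The \NP-hardness results of \cite{UmmelsW09} concern \PosNE and \StatNE, not \QualNE, so they do not transfer. More decisively, SSMG objectives are B\"uchi conditions (reaching a terminal vertex with a self-loop is equivalent to visiting it infinitely often), and the final theorem of this very paper places \QualNE for (co-)B\"uchi objectives in \PTime. Hence \QualNE restricted to SSMGs is in \PTime and cannot yield \NP-hardness unless $\PTime=\NP$. The paper takes hardness from \cite{Ummels08}, whose reduction genuinely exploits the expressiveness of Streett pairs; you need a reduction of that kind, not a reachability-based one.
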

\begin{proof}
Hardness was already proven in \cite{Ummels08}. To prove membership in \NP,
we describe a nondeterministic, polynomial-time algorithm:
On input~$\calG,v_0,\vec{x}$, the algorithm starts by guessing a subarena~$Z'
\subseteq V$ and, for each player~$i$ with $x_i=0$, a positional strategy
$\tau_i$ of the coalition~$\Pi\setminus\{i\}$ in the 2SG~$\calG_i$, as defined
in the proof of \cref{lemma:nash-threat}.
In the next step, the algorithm checks (in polynomial time)
whether $\val^{\tau_i}(v)=1$ for each vertex $v\in Z'$ and
each player~$i$ with $x_i=0$. If not, the algorithm rejects immediately.
Otherwise, the algorithm proceeds by calling the procedure $\StreettEC$ to
determine the
union~$T'$ of all end components with payoff~$\vec{x}$ that are contained in
$S'$. Finally, the algorithm computes (in polynomial time) the
value~$\val^{\calG(\vec{x})}(v_0)$ of the MDP~$\calG(\vec{x})$ with $Z'$
substituted for $Z$ and $T'$ substituted for $T$. If this value is~1, the
algorithm accepts; otherwise, it rejects.

It remains to show that the algorithm is correct: On the one hand, if
$(\calG,v_0)$ has a Nash equilibrium with payoff~$\vec{x}$, then the run
of the algorithm where it guesses $Z'=Z$ and globally optimal positional
strategies $\tau_i$ (which exist since in the games $\calG_i$ the coalition has
a Rabin objective) will be accepting since then $T'=T$
and, by \cref{lemma:nash-muller}, $\val^{\calG(\vec{x})}(v_0)=1$.
On the other hand, in any accepting run of the algorithm we have
$Z'\subseteq Z$ and $T'\subseteq T$, and the value that the algorithm
computes cannot be higher than $\val^{\calG(\vec{x})}(v_0)$; hence,
$\val^{\calG(\vec{x})}(v_0)=1$, and \cref{lemma:nash-muller} guarantees the
existence of a Nash equilibrium with payoff~$\vec{x}$.
\end{proof}

\begin{theorem}\label{thm:nash-rabin}
\QualNE is \coNP-complete for games with Rabin objectives.
\end{theorem}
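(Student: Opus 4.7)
The plan is to prove matching coNP upper and lower bounds, with the upper bound mirroring the NP algorithm of \cref{thm:nash-streett} in dual fashion --- replacing existential guesses with universal ones.

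For coNP-hardness, I would reduce from the coNP-complete qualitative decision problem for 2SGs with Streett objectives (\cref{table:summary-results}). Given such a 2SG $\mathcal{H}$ with Streett protagonist $A$ and Rabin antagonist $B$, I construct an SMG $\calG$ on the same arena and vertex partition, equipped with two Rabin objectives: the player controlling $A$'s vertices is given a trivially true Rabin condition such as $\{(V,\emptyset)\}$, while the player controlling $B$'s vertices retains $B$'s Rabin condition. A short analysis via Martin's determinacy theorem and \cref{prop:qual-nash} shows that $(\calG, v_0)$ admits a Nash equilibrium with payoff $(1, 0)$ iff $A$ wins $\mathcal{H}$ almost surely from $v_0$.

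For coNP membership, the key observation is that for Rabin objectives player~$i$ (rather than the coalition) admits a globally optimal positional strategy, which lets us characterise the set $\{v : \val_i^\calG(v) = 0\}$ as $\bigcap_{\sigma_i} \{v : \val^{\sigma_i}(v) = 0\}$, where $\sigma_i$ ranges over all positional strategies of player~$i$ in $\calG_i$ --- each individual set of the intersection being polynomial-time computable. My coNP algorithm therefore, on input $(\calG, v_0, \vec{x})$, universally guesses a positional strategy $\sigma_i$ of player~$i$ in $\calG_i$ for each $i$ with $x_i = 0$; deterministically computes $Z^i := \{v : \val^{\sigma_i}(v) = 0\}$ (fixing positional $\sigma_i$ reduces $\calG_i$ to an MDP for the coalition, solvable in polynomial time), $Z' := \bigcap_i Z^i$, $T' := \RabinEC(Z')$, and the almost-sure reachability value of the MDP with domain $Z'$ and target $T'$; it accepts iff $v_0 \in Z'$ and this value equals $1$ on every universal branch.

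For correctness, the forward direction uses $\val^{\sigma_i}(v) \leq \val_i^\calG(v)$, so $Z \subseteq Z'$ and $T \subseteq T'$ on every branch; any value-$1$ witness on $(Z, T)$ given by \cref{lemma:nash-muller} extends to a value-$1$ witness on $(Z', T')$ by stipulating arbitrary behaviour on $Z' \setminus Z$, since from $v_0 \in Z$ the play remains in $Z$ almost surely. The backward direction invokes the specific branch where each $\sigma_i$ is a \emph{globally optimal} positional strategy for player~$i$ in $\calG_i$; these exist precisely because player~$i$'s condition is Rabin, and on this branch $Z' = Z$, $T' = T$, so acceptance together with \cref{lemma:nash-muller} delivers the equilibrium. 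The main subtlety to iron out is that $Z'$, being an intersection of subarenas, need not itself be a subarena in the multi-player setting; this is handled by replacing $Z'$ with its largest subarena containing $v_0$ (computable in polynomial time) before the MDP computation, and both directions of the correctness argument survive this refinement.
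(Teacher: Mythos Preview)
Your proposal is correct, and both halves are close in spirit to the paper's proof while differing in packaging.

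For \emph{hardness}, the paper reduces directly from \textsc{Unsat}: given a CNF formula~$\phi$, it builds a two-player (non-stochastic) game where player~1 has the trivially satisfied Rabin pair $(V,\emptyset)$ and player~0 has Rabin pairs $(\{X\},\{\neg X\})$ and $(\{\neg X\},\{X\})$ for each variable~$X$; it then argues that $\phi$ is unsatisfiable iff there is a Nash equilibrium with payoff $(0,1)$. Your reduction from the coNP-complete qualitative problem for Streett 2SGs is more modular---you inherit hardness from \cref{table:summary-results} rather than re-deriving it---and your correctness argument is in fact simpler than you suggest: \cref{prop:qual-nash} is not needed, since if $(\sigma_A,\sigma_B)$ is an equilibrium with payoff $(1,0)$ then $B$'s inability to improve immediately gives $\val^{\sigma_A}(v_0)=1$ for the Streett condition, and conversely any optimal~$\sigma_A$ paired with arbitrary~$\sigma_B$ is such an equilibrium.

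For \emph{membership}, the paper presents an NP algorithm for the complement: it existentially guesses a set~$Z'$ together with positional strategies~$\sigma_i$ for each losing player~$i$, checks that every $v\in Z'$ satisfies $\val^{\sigma_i}(v)>0$ for some~$i$ (so $Z'\subseteq V\setminus Z$), runs $\RabinEC$ on $V\setminus Z'$, and accepts if the resulting MDP value is strictly below~$1$. Your direct coNP algorithm is the natural dual: you universally guess only the strategies, deterministically derive the over-approximation $Z'\supseteq Z$, and accept iff the value equals~$1$. Both hinge on exactly the same fact---that the Rabin player has a globally optimal positional strategy, so the ``bad'' branch (respectively the ``good'' guess) recovers $Z$ exactly. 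Your version is arguably tidier in that the set is derived rather than guessed, though you then have to repair the subarena property of the intersection, which you handle correctly; the paper's version has an analogous wrinkle in that the complement of its guessed subarena~$Z'$ need not itself be a subarena.
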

\begin{proof}
Hardness is proven by a slight modification of the reduction for
demonstrating \NP-hardness of \QualNE for games with Streett objectives (see
the appendix). To show membership in \coNP, we describe a nondeterministic,
polynomial-time algorithm for the complement of \QualNE.
On input~$\calG,v_0,\vec{x}$, the algorithm starts by guessing a subarena
$Z'\subseteq V$ and, for each player~$i$ with $x_i=0$, a positional strategy
$\sigma_i$ of player~$i$ in $\calG$. In the next step,
the algorithm checks whether for each vertex~$v\in Z'$ there exists some
player~$i$ with $x_i=0$ and $\val^{\sigma_i}(v)>0$. If not, the algorithm
rejects immediately. Otherwise, the algorithm proceeds by calling the procedure
$\RabinEC$ to determine the union~$T'$ of all end components with
payoff~$\vec{x}$ that are contained in $V\setminus Z'$. Finally, the algorithm
computes (in polynomial time) the value~$\val^{\calG(\vec{x})}(v_0)$ of the
MDP~$\calG(\vec{x})$ with $V\setminus Z'$ substituted for $Z$ and $T'$
substituted for $T$. If this value is not~1, the algorithm accepts; otherwise,
it rejects.

The correctness of the algorithm is proven in a similar fashion as in the
proof of the previous theorem.
\end{proof}

Since any parity condition can be turned into both a Streett and a Rabin
condition where the number of pairs is linear in the number of priorities,
we can immediately infer from \cref{thm:nash-streett,thm:nash-rabin}
that \QualNE is in $\NP\cap\coNP$ for games with parity objectives.

\begin{corollary}\label{cor:nash-parity-np-conp}
\QualNE is in $\NP\cap\coNP$ for games with parity objectives.
\end{corollary}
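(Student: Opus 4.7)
The plan is to derive the two containments separately by translating the parity objectives into Streett objectives for the \NP bound and into Rabin objectives for the \coNP bound, then invoking \cref{thm:nash-streett,thm:nash-rabin}. The only thing that needs checking is that both translations are polynomial, so that the resulting instance of \QualNE has size polynomial in the original.

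Concretely, given an SMG $(\calG,v_0)$ where each player~$i$ has a parity objective specified by a priority function $\Omega_i:V\to\bbN$ with range $\{0,1,\dots,d_i\}$, I would first form an equivalent Streett instance by replacing each $\Omega_i$ with the set of Streett pairs $\{(F_k,G_k): k \text{ odd}, 0\leq k\leq d_i\}$ where $F_k=\Omega_i^{-1}(\{0,1,\dots,k\})$ and $G_k=\Omega_i^{-1}(\{0,1,\dots,k-1\})$; an infinite play satisfies this Streett condition iff the minimum priority visited infinitely often is even. Since the number of pairs is at most $\lceil (d_i+1)/2\rceil$, the resulting game has polynomial size, and \cref{thm:nash-streett} yields the \NP upper bound (with the same payoff vector $\vec{x}$, which is unchanged since the objectives are equivalent as sets).

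For the \coNP bound I would use the dual translation, rewriting each $\Omega_i$ as an equivalent Rabin condition with pairs $\{(F_k,G_k): k \text{ even}, 0\leq k\leq d_i\}$, so that a play satisfies the Rabin condition iff some even priority is the minimum one visited infinitely often. Again this blow-up is linear, and \cref{thm:nash-rabin} gives membership in \coNP.

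The only point worth double-checking is that both translations are genuinely polynomial rather than exponential (the excerpt already warns that translations \emph{into} Muller can be exponential), but here the parity-to-Streett and parity-to-Rabin translations introduce only one pair per even/odd priority, so the overall reduction is indeed polynomial and the two containments combine to give \QualNE $\in \NP\cap\coNP$ for parity games.
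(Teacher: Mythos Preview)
Your proposal is correct and matches the paper's own argument essentially verbatim: the paper simply observes that any parity condition can be rewritten as both a Streett and a Rabin condition with a number of pairs linear in the number of priorities, and then invokes \cref{thm:nash-streett,thm:nash-rabin}. Your explicit description of the two translations just spells out this one-line observation.
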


It is a major open problem whether the qualitative (or even the quantitative)
decision problem for 2SGs with parity objectives is in \PTime. This
would imply that \QualNE is decidable in polynomial time for games with parity
objectives since this would allow us to compute the domain of the
MDP~$\calG(\vec{x})$ in polynomial time. For each $d\in\bbN$, a class of games
where the qualitative decision problem is provably in \PTime is the class of
all 2SGs with parity objectives that uses at most $d$ priorities
\cite{ChatterjeeJH03}. For $d=2$, this class includes all 2SGs with a B\"uchi
or a co-B\"uchi objective (for player~$0$). Hence, we have the following
theorem.

\begin{theorem}
For each $d\in\bbN$, \QualNE is in \PTime for games with parity winning
conditions that use at most $d$ priorities. In particular, \QualNE is in
\PTime for games with (co-)B\"uchi objectives.
\end{theorem}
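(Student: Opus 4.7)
The plan is to apply \cref{lemma:nash-muller}: it suffices to construct the MDP $\calG(\vec{x})$ and check whether its reachability value from $v_0$ equals~$1$. Since parity is a special case of the Muller condition (indeed a Streett condition with $\lceil d/2 \rceil$ pairs), the construction and lemma apply directly. The only thing to verify is that each of the three ingredients of $\calG(\vec{x})$, namely its domain $Z$, its target set $T$, and the resulting MDP value, can be computed in polynomial time when at most $d$ priorities are used.

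First, I would compute the domain $Z$. For each player~$i$ with $x_i=0$, the set $V\setminus V_i^{>0}=\{v\in V:\val_i^\calG(v)=0\}$ is the ``almost-sure losing'' set for player~$i$ in the 2SG $\calG_i$ (defined as in the proof of \cref{lemma:nash-threat}). Since the complement of a parity condition is again a parity condition (shift all priorities by~$1$), $\calG_i$ is a 2SG with parity objectives using at most $d+1$ priorities on each side, so the qualitative decision problem for $\calG_i$ is decidable in polynomial time by \cite{ChatterjeeJH03}. Intersecting the resulting sets over all players~$i$ with $x_i=0$ yields $Z$, and if $v_0\notin Z$ the algorithm rejects.

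Second, I would compute the target set $T$ by invoking the procedure $\StreettEC(Z)$ after translating each player's parity condition into an equivalent Streett condition of size $O(d)$. As already observed in the paper, $\StreettEC$ runs in polynomial time, so this step is polynomial. Third, the value $\val^{\calG(\vec{x})}(v_0)$ of the resulting reachability MDP is computable in polynomial time via linear programming (\cf \cite{Puterman94}). By \cref{lemma:nash-muller}, $(\calG,v_0)$ has a Nash equilibrium with payoff~$\vec{x}$ iff this value is~$1$. The second assertion about (co-)B\"uchi objectives follows because these correspond to parity conditions with two priorities.

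The only step that is not immediately polynomial is the computation of the sets $V_i^{>0}$, which is precisely where the bounded-priority assumption enters via \cite{ChatterjeeJH03}; the remaining steps reuse the machinery already developed in \cref{thm:nash-streett,thm:nash-rabin}. A full polynomial-time algorithm for parity 2SGs would remove the dependence on~$d$ and yield \PTime membership in general, as already noted before the theorem.
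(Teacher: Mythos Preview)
Your proposal is correct and follows essentially the same approach as the paper: the paper's argument (given in the paragraph preceding the theorem) is precisely that the only non-polynomial step in the pipeline of \cref{lemma:nash-muller} is computing the domain~$Z$, and that this becomes polynomial by \cite{ChatterjeeJH03} once the number of priorities is bounded. You spell out the remaining steps (computing~$T$ via $\StreettEC$ and the reachability value via linear programming) more explicitly than the paper does, but these are exactly the polynomial-time subroutines already used in \cref{thm:nash-streett}.
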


\section{Conclusion}

We have analysed the complexity of deciding whether a stochastic
multiplayer game with $\omega$-regular objectives has a Nash equilibrium whose
payoff falls into a certain interval. Specifically, we have isolated several
decidable restrictions of the general problem that have a manageable complexity
(\PSpace at most). For instance, the complexity of the qualitative variant of
\NE is usually not higher than for the corresponding problem for two-player
zero-sum games.

Apart from settling the complexity of \NE (where arbitrary mixed strategies are
allowed), two directions for future work come to mind: First, one could study 
other restrictions of \NE that might be decidable. For example, it seems
plausible that the restriction of \NE to games with two players is decidable.
Second, it seems interesting to see whether our decidability results can be
extended to more general models of games, \eg \emph{concurrent games} or games
with infinitely many states like \emph{pushdown games}.

\bibliographystyle{mgiabbrv}
\bibliography{../../biblio/all.bib}

\begin{thebibliography}{10}

\bibitem{AllenderBKM06}
\textsc{E.~Allender, P.~B{\"u}rgisser, J.~Kjeldgaard-Pedersen \& P.~B.
  Miltersen}.
\newblock On the complexity of numerical analysis.
\newblock In \emph{Proceedings of the 21st Annual IEEE Conference on
  Computational Complexity, CCC~'06}, pp. 331--339. IEEE Computer Society
  Press, 2006.

\bibitem{Canny88}
\textsc{J.~Canny}.
\newblock Some algebraic and geometric computations in {PSPACE}.
\newblock In \emph{Proceedings of the 20th annual {ACM} Symposium on Theory of
  Computing, STOC~'88}, pp. 460--469. ACM Press, 1988.

\bibitem{Chatterjee07}
\textsc{K.~Chatterjee}.
\newblock \emph{Stochastic $\omega$-regular games}.
\newblock Ph.D. thesis, U.C.\ Berkeley, 2007.

\bibitem{ChatterjeeAH05}
\textsc{K.~Chatterjee, L.~de~Alfaro \& T.~A. Henzinger}.
\newblock The complexity of stochastic {R}abin and {S}treett games.
\newblock In \emph{Proceedings of the 32nd International Colloquium on
  Automata, Languages and Programming, ICALP~2005}, vol. 3580 of \emph{LNCS},
  pp. 878--890. Springer-Verlag, 2005.

\bibitem{ChatterjeeJH03}
\textsc{K.~Chatterjee, M.~Jurdzi{\'n}ski \& T.~A. Henzinger}.
\newblock Simple stochastic parity games.
\newblock In \emph{Proceedings of the 12th Annual Conference of the European
  Association for Computer Science Logic, CSL~2003}, vol. 2803 of \emph{LNCS},
  pp. 100--113. Springer-Verlag, 2003.

\bibitem{ChatterjeeJH04}
\textsc{K.~Chatterjee, M.~Jurdzi{\'n}ski \& T.~A. Henzinger}.
\newblock Quantitative stochastic parity games.
\newblock In \emph{Proceedings of the 15th {ACM-SIAM} Symposium on Discrete
  Algorithms, SODA~2004}, pp. 121--130. ACM Press, 2004.

\bibitem{ChatterjeeJM04}
\textsc{K.~Chatterjee, R.~Majumdar \& M.~Jurdzi{\'n}ski}.
\newblock On {N}ash equilibria in stochastic games.
\newblock In \emph{Proceedings of the 13th Annual Conference of the European
  Association for Computer Science Logic, CSL~2004}, vol. 3210 of \emph{LNCS},
  pp. 26--40. Springer-Verlag, 2004.

\bibitem{ChenD06}
\textsc{X.~Chen \& X.~Deng}.
\newblock Settling the complexity of two-player {N}ash equilibrium.
\newblock In \emph{Proceedings of the 47th Annual IEEE Symposium on Foundations
  of Computer Science, FOCS~2006}, pp. 261--272. IEEE Computer Society Press,
  2006.

\bibitem{Condon92}
\textsc{A.~Condon}.
\newblock The complexity of stochastic games.
\newblock \emph{Information and Computation}, 96(2):203--224, 1992.

\bibitem{CourcoubetisY95}
\textsc{C.~A. Courcoubetis \& M.~Yannakakis}.
\newblock The complexity of probabilistic verification.
\newblock \emph{Journal of the ACM}, 42(4):857--907, 1995.

\bibitem{CourcoubetisY98}
\textsc{C.~A. Courcoubetis \& M.~Yannakakis}.
\newblock Markov decision processes and regular events.
\newblock \emph{IEEE Transactions on Automatic Control}, 43(10):1399--1418,
  1998.

\bibitem{DaskalakisGP06}
\textsc{C.~Daskalakis, P.~W. Goldberg \& C.~H. Papadimitriou}.
\newblock The complexity of computing a {N}ash equilibrium.
\newblock In \emph{Proceedings of the 38th Annual {ACM} Symposium on Theory of
  Computing, STOC~2006}, pp. 71--78. ACM Press, 2006.

\bibitem{Alfaro97}
\textsc{L.~de~Alfaro}.
\newblock \emph{Formal Verification of Probabilistic Systems}.
\newblock Ph.D. thesis, Stanford University, 1997.

\bibitem{AlfaroH00}
\textsc{L.~de~Alfaro \& T.~A. Henzinger}.
\newblock Concurrent omega-regular games.
\newblock In \emph{Proceedings of the 15th IEEE Symposium on Logic in Computer
  Science, LICS~2000}, pp. 141--154. IEEE Computer Society Press, 2000.

\bibitem{EmersonJ88}
\textsc{E.~A. Emerson \& C.~S. Jutla}.
\newblock The complexity of tree automata and logics of programs (extended
  abstract).
\newblock In \emph{Proceedings of the 29th Annual Symposium on Foundations of
  Computer Science, FoCS~'88}, pp. 328--337. IEEE Computer Society Press, 1988.

\bibitem{EtessamiKVY08}
\textsc{K.~Etessami, M.~Z. Kwiatkowska, M.~Y. Vardi \& M.~Yannakakis}.
\newblock Multi-objective model checking of {M}arkov decision processes.
\newblock \emph{Logical Methods in Computer Science}, 4(4), 2008.

\bibitem{Halmos74}
\textsc{P.~R. Halmos}.
\newblock \emph{Measure Theory}, vol.~18 of \emph{Graduate Texts in
  Mathematics}.
\newblock Springer-Verlag, 1974.

\bibitem{HornG08}
\textsc{F.~Horn \& H.~Gimbert}.
\newblock Optimal strategies in perfect-information stochastic games with tail
  winning conditions.
\newblock \emph{CoRR}, abs/0811.3978, 2008.

\bibitem{HunterD05}
\textsc{P.~Hunter \& A.~Dawar}.
\newblock Complexity bounds for regular games.
\newblock In \emph{Proceedings of the 30th International Symposium on
  Mathematical Foundations of Computer Science, MFCS~2005}, vol. 3618 of
  \emph{LNCS}, pp. 495--506. Springer-Verlag, 2005.

\bibitem{Martin98}
\textsc{D.~A. Martin}.
\newblock The determinacy of {B}lackwell games.
\newblock \emph{Journal of Symbolic Logic}, 63(4):1565--1581, 1998.

\bibitem{Nash50}
\textsc{J.~F. {Nash Jr.}}
\newblock Equilibrium points in {$N$}-person games.
\newblock \emph{Procee\-dings of the National Academy of Sciences of the USA},
  36:48--49, 1950.

\bibitem{NeymanS03}
\textsc{A.~Neyman \& S.~Sorin} (Eds.).
\newblock \emph{Stochastic Games and Applications}, vol. 570 of \emph{{NATO}
  Science Series C}. Springer-Verlag, 2003.

\bibitem{OsborneR94}
\textsc{M.~J. Osborne \& A.~Rubinstein}.
\newblock \emph{A Course in Game Theory}.
\newblock MIT Press, 1994.

\bibitem{Puterman94}
\textsc{M.~L. Puterman}.
\newblock \emph{Markov Decision Processes: Discrete Stochastic Dynamic
  Programming}.
\newblock John Wiley and Sons, 1994.

\bibitem{Thomas95}
\textsc{W.~Thomas}.
\newblock On the synthesis of strategies in infinite games.
\newblock In \emph{Proceedings of the 12th Annual Symposium on Theoretical
  Aspects of Computer Science, STACS~'95}, vol. 900 of \emph{LNCS}, pp. 1--13.
  Springer-Verlag, 1995.

\bibitem{Ummels08}
\textsc{M.~Ummels}.
\newblock The complexity of {N}ash equilibria in infinite multiplayer games.
\newblock In \emph{Proceedings of the 11th International Conference on
  Foundations of Software Science and Computation Structures, FOSSACS~2008},
  vol. 4962 of \emph{LNCS}, pp. 20--34. Springer-Verlag, 2008.

\bibitem{UmmelsW09}
\textsc{M.~Ummels \& D.~Wojtczak}.
\newblock The complexity of {N}ash equilibria in simple stochastic multiplayer
  games.
\newblock In \emph{Proceedings of the 36th International Colloquium on
  Automata, Languages and Programming, ICALP~2009}, vol. 5556 of \emph{LNCS},
  pp. 297--308. Springer-Verlag, 2009.
\newblock To appear.

\bibitem{Zielonka04}
\textsc{W.~Zielonka}.
\newblock Perfect-information stochastic parity games.
\newblock In \emph{Proceedings of the 7th International Conference on
  Foundations of Software Science and Computation Structures, FOSSACS~2004},
  vol. 2987 of \emph{LNCS}, pp. 499--513. Springer-Verlag, 2004.

\end{thebibliography}

\section*{Appendix}

\begin{theorem*}
\QualNE is \coNP-hard for games with Rabin objectives.
\end{theorem*}
\begin{proof}
The proof is a variant of the proof for \NP-hardness of the problem of deciding
whether player~$0$ has a winning strategy in a two-player zero-sum game with a
Rabin objective \cite{EmersonJ88} and by a reduction from the
unsatisfiability problem for Boolean formulae.

Given a Boolean formula $\phi$ in
conjunctive normal form, we construct a two-player SMG $\calG_\phi$ without
any stochastic vertex as follows: For each clause $C$ the game $\calG_\phi$ has
a vertex $C$, which is controlled by player~0, and for each literal $X$
or $\neg X$ occurring in $\phi$ there is a vertex $X$ or $\neg X$,
respectively, which is controlled by player~1. There are edges from a clause
to each literal that occurs in this clause, and from a literal to every clause
occurring in $\phi$.
Player~1's objective is given by the single Rabin pair
$(V,\emptyset)$, \ie she always wins, whereas player~0's objective
consists of all Rabin pairs of the form $(\{X\},\{\neg X\})$ and
$(\{\neg X\},\{X\})$.

Obviously, $\calG_\phi$ can be constructed from $\phi$ in polynomial time. We
claim that $\phi$ is unsatisfiable if and only if $(\calG_\phi,C)$
has a Nash equilibrium with payoff~$(0,1)$ (where $C$ is an arbitrary clause).

$(\Rightarrow)$ Assume that $\phi$ is not satisfiable. We claim that player~1
has a strategy $\tau$ to ensure that player~0's objective is
violated. Consequently, for any strategy~$\sigma$ of player~0, the strategy
profile $(\sigma,\tau)$ is a Nash equilibrium with payoff~$(0,1)$. Otherwise,
let $\sigma$ be a positional optimal strategy for player~0. By determinacy,
this strategy ensures that player~0's objective is satisfied. But a
positional strategy $\sigma$ of player~1 chooses for each clause a literal
contained in this clause. Since $\phi$ is unsatisfiable, there must exist a
variable $X$ and clauses $C_1$ and $C_2$ such that $\sigma(C_1)=X$ and
$\sigma(C_2)=\neg X$. Player~2's counter strategy is to play from $X$ to
$C_2$ and from any other literal to $C_1$. So the strategy $\sigma$ is not
optimal, a contradiction.

$(\Leftarrow)$ Assume that $\phi$ is satisfiable. Consider player~1's
positional strategy $\sigma$ of playing from a clause to a literal that
satisfies this clause. This ensures that for each variable $X$ at most
one of the literals $X$ or $\neg X$ is visited infinitely often.
The value of $\sigma$ from any vertex is~1; hence, there can be no Nash
equilibrium with payoff~$(0,1)$.
\end{proof}

\end{document}